\newtheorem{theorem}{Theorem}
\newtheorem{lemma}{Lemma}
\newtheorem{corollary}{Corollary}
\newtheorem{observation}{Observation}
\title{Learning-Augmented Online TRP on a Line}
\author{
    Swapnil Guragain and Gokarna Sharma
    \affiliations
    Kent State University
    \emails
    \{sguragai, gsharma2\}@kent.edu
}
\begin{document}

\maketitle

\begin{abstract}
We study the online traveling repairperson problem on a line within the recently proposed learning-augmented framework, which provides predictions on the requests to be served via machine learning.  In the original model (with no predictions), there is a stream of requests released over time along the line. The goal is to minimize the sum (or average) of the completion times of the requests. 
In the original model, the state-of-the-art  competitive ratio lower bound is $1+\sqrt{2} > 2.414$ for any deterministic  algorithm and the state-of-the-art competitive ratio upper bound is 4 for a deterministic algorithm.  Our prediction model involves predicted positions, possibly error-prone, of each request in the stream known a priori but the arrival times of requests are not known until their arrival. 
We first establish a 3-competitive lower bound which extends to the original model.
We then design a deterministic algorithm that is $(2+\sqrt{3})\approx 3.732$-competitive when predictions are perfect. With imperfect predictions (maximum error $\delta > 0$),  we show that our deterministic algorithm becomes $\min\{3.732+4\delta,4\}$-competitive, knowing $\delta$. To the best of our knowledge, these are the first results for online traveling repairperson problem in the learning-augmented framework. 
\end{abstract}






\section{Introduction}

We consider the {\em online} variant of the fundamental {\em traveling repairperson problem} (TRP). We solve online TRP on a real line $\mathcal{L}$ of length $|\mathcal{L}|$.
The input is a stream of requests released over time along the points on $\mathcal{L}$. The goal in TRP is to minimize the sum (or average) of the completion times of the requests, in contrast to {\em traveling salesperson problem} (TSP) where  the goal is to minimize the maximum completion time (or makespan).  TRP facilitates quickly serving each request rather than TSP which minimizes the maximum time to serve any request.
TRP has a variety of relevant applications in logistics and robotics.  Consider for an example a salesman/repairman/vehicle/robot (denoted as {\em server} in this paper) that has to serve locations on its workspace. 
If the server's workspace is a road/highway segment, then it resembles a line and our results can apply to such application. The algorithm we design will run by the server to serve requests. 

Online TRP has been studied quite extensively 
on the {\bf \em original} model where locations of the requests as well as their arrival times  are not known until they arrive; we will define competitive ratio formally later.  In this paper, we study online TRP  considering  a stronger {\bf\em prediction} model within the recently proposed learning-augmented framework \cite{LykourisV18,Rohatgi20,Kumar18}(Table \ref{table:model-comparison2} compares original and prediction models). The prediction model provides some knowledge on problem beforehand through some external source as considered in the learning-augmented framework. The external source can be a  machine learning approach running over the historical data relevant to the problem. Specifically, we consider the prediction model that involves predictions on the {\em locations} as considered in \cite{GouleakisLS23} where requests will arrive over time. Notice that in the original model (no predictions), no such knowledge is provided to the online algorithm $ON$. In our prediction model, $ON$ is provided beforehand with the  predicted locations of the future requests. 

\begin{table}[!t]
\footnotesize{
\centering
\begin{tabular}{ll}
\toprule
{\bf Model} & {\bf Characteristic}  \\ 
\toprule
Original &  Requests arriving at time $t$ are known at $t$ \\
\hline
Prediction & All request locations are known beforehand \\
& but not time $t$ the requests arrive\\ 
\bottomrule
\end{tabular}
\caption{Comparing original and prediction models.
\label{table:model-comparison2}
}
}
\vspace{-0mm}
\end{table}

Online TSP, where the goal is minimize the maximum completion time, has been studied on a line in the prediction model we discussed above  \cite{GouleakisLS23}. 
However, to the best of our knowledge, there is no previous study on online TRP in the prediction model even on a line. This paper is the first attempt on online TRP in the prediction model on a line as in \cite{GouleakisLS23}.

\begin{table}[!t]
\footnotesize{
\centering
\begin{tabular}{lll}
\toprule
{\bf Paper} & {\bf Competitive Ratio} & {\bf Model} \\ 
\toprule
\cite{feuerstein2001line} &  $1+\sqrt{2}\approx 2.414$ & Original \\

\bottomrule
{\bf This paper} & $3$ & {\bf Original}\\ 
\bottomrule
\end{tabular}
\caption{Deterministic online TRP lower bounds on a line.}
\label{table:model-comparison1}
}

\end{table}

\vspace{1mm}
\noindent{\bf Our Results.} 
We have the following four results. We say that the prediction is {\em perfect} if the request predicted to arrive at a location actually arrives at that location. The prediction is called {\em imperfect} otherwise with parameter $0\leq \delta\leq 1$ denoting the amount of prediction error. We say that line $\mathcal{L}$ is a {\em half-line} if servers' location is at its either end.
\begin{itemize}
\item We establish that no deterministic algorithm can achieve 3 competitive ratio on a line in our prediction model even under perfect predictions. Interestingly, this lower bound applies to the original model since our prediction model makes the problem stronger. 
({\bf Section \ref{section:lower-half}}) 

\item We develop a deterministic algorithm with a competitive ratio upper bound of  $2+\sqrt{3}\approx 3.732$ on a half-line in the original model. This bound extends to our prediction model with any prediction error $0\leq \delta\leq 1$. 
({\bf Section \ref{section:upper-half}}) 

\item In our prediction model with perfect predictions, we develop a deterministic algorithm with competitive ratio $2+\sqrt{3}\approx 3.732$ on a line. ({\bf Section \ref{section:upper-line-0}}) 

\item In our prediction model with imperfect predictions (error $\delta>0$), we give a deterministic $\min\{3.732+4\delta,4\}$-competitive algorithm, knowing $\delta$. ({\bf Section \ref{section:upper-line-d}}) 

\end{itemize}

Tables \ref{table:model-comparison1} and \ref{table:model-comparison} summarize the existing results in the original model as well as our results in the prediction model.

\vspace{2mm}
\noindent{\bf Related Work.}
All existing results on online TRP are in the original model.
\cite{feuerstein2001line} established the only previously known competitive lower bound of $1+\sqrt{2}\approx 2.414$ for online TRP, which is the state-of-the-art. This lower bound applies to any metric that satisfies triangle inequality (line is a special case). 
We provide in this paper the improved competitive lower bound of 3 in a line (more precisely a half-line) which will also serve as a lower bound for any metric satisfying triangle inequality.

Regarding the upper bound, 
\cite{feuerstein2001line} provided the first solution to online TRP on a line, which is a  $9$-competitive deterministic algorithm.  
\cite{KrumkePPS03-TCS} obtained a $(1+\sqrt{2})^2\approx 5.8285$ in any metric which applies to line. 
\cite{BienkowskiL19-MFCS} provided a $5.429$-competitive deterministic algorithm on a line. \cite{Hwang}  give a $5.14$-competitive algorithm in any metric which applies to a line. The state-of-the-art is the 4-competitive deterministic algorithm in any metric due to  \cite{BienkowskiKL21-ICALP}. 


\section{Model, Definitions, and Notations}

\noindent{\bf Model.} We consider a single server initially located at a fixed distinguished location on $\mathcal{L}$, which we call {\em origin} $o$.  
The execution starts at time $0$ and time proceeds in discrete time steps. The requests arrive at discrete time steps on any location on $\mathcal{L}$. Any number of requests may arrive at any time step. 
%
Each request 
is communicated to the server 
as soon as it arrives, i.e., a request $r_i$ with {\em arrival time}  $t_i$ coming at  some location $p_i$ on $\mathcal{L}$ is known by the server at time $t_i$. 
Server serves requests traversing $\mathcal{L}$ starting from origin $o$. 
Request $r_i$ is considered served when the server reaches the location $p_i$ the first time while traversing $\mathcal{L}$. If server reaches $p_i$ the first time at time $t_i'$, then $t_i'$ is the {\em completion time} for $r_i$. 

\vspace{1mm}
\noindent{\bf Performance Metric -- Sum of Completion Times.}
We consider $n\geq 1$ requests ($n$ is not known to the server a priori) $\mathcal{R}:=\{r_1,\ldots,r_n\}$. Let the server runs the online algorithm $ON$ to decide on how to serve the requests. Let $ON(\mathcal{R})=\sum_{i=1}^{n} t_i'$ denote the sum of the completion times of the requests in $\mathcal{R}$. 

\vspace{1mm}
\noindent{\bf Competitive Ratio.}
We measure the efficiency of $ON$ by comparing its performance with the performance of the optimal offline algorithm $OPT$ provided all the requests in $\mathcal{R}$ arrive at time $t=0$. Let $OPT(\mathcal{R})$ be the sum of completion times   of $OPT$. We have that $OPT(\mathcal{R})\geq |OPT\_TOUR|$, where $OPT\_TOUR$ is the tour computed by $OPT$ that minimizes the completion time for each request in $\mathcal{R}$.
In online TRP, request $r_i$ with arrival time $t_i$ cannot be served before $t_i$, i.e., $OPT(\mathcal{R})\geq \sum_{i=1}^n t_i$.  
Therefore, the goal is to minimize the ratio 
$\frac{ON(\mathcal{R})}{OPT(\mathcal{R})}=\frac{\sum_{i=1}^n t_i'}{\max\{|OPT\_Tour|,\sum_{i=1}^n t_i\}}$
which is called the {\em competitive ratio} in the field of online algorithm design \cite{Borodin98}. The best possible ratio is 1. The above ratio can be written for each request $r_i$ as  
$\frac{ON(r_i)}{OPT(r_i)}=\frac{t_i'}{\max\{|OPT\_Tour_i|,t_i\}},$
where $|OPT\_TOUR_i|$ is the tour length of $OPT\_TOUR$  until $r_i$ is served. Minimizing $\frac{ON(r_i)}{OPT(r_i)}$ for each individual request $r_i$ minimizes $\frac{ON(\mathcal{R})}{OPT(\mathcal{R})}$ for all requests in $\mathcal{R}$. Therefore, in this paper, we strive to minimize individual ratio $\frac{ON(r_i)}{OPT(r_i)}$. 

\begin{table}[!t]
\footnotesize{
\centering
\begin{tabular}{lll}
\toprule
{\bf Algorithm} & {\bf Competitive Ratio} & {\bf Model} \\ 
\toprule
\cite{feuerstein2001line} &  9 & Original \\
\hline
\cite{KrumkePPS03-TCS} & $3+2\sqrt{2}\approx 5.83$ & Original\\
\hline
\cite{BienkowskiL19-MFCS} & $5.429$ & Original\\
\hline
\cite{Hwang} & $5.14$  & Original\\
\hline
\cite{BienkowskiKL21-ICALP}  & 4 & Original\\ 
\bottomrule
{\bf This paper, $\delta=0$} & $2+\sqrt{3}\approx 3.732$ & {\bf Prediction}\\
\hline
{\bf This paper, known $\delta>0$} & $\min\{3.732+4\delta,4\}$ & {\bf Prediction}\\ 
\bottomrule
\end{tabular}
\caption{Deterministic online TRP solutions on a line.}
\label{table:model-comparison}
}

\end{table}

\vspace{1mm}
\noindent{\bf Line and Half-Line.}
Let the endpoints of $\mathcal{L}$ be $a,b$. 
The origin  $o\in \mathcal{L}$. We measure distance on $\mathcal{L}$ from $o$, i.e., point $a$ ($b$) is at distance $|a|$ ($|b|$). Let us denote by $\mathcal{L}_{right}$ ($\mathcal{L}_{left}$) the part of $\mathcal{L}$ that is on the right (left) of $o$. Let $x$ ($y$) be a point on $\mathcal{L}_{left}$ ($\mathcal{L}_{right}$). We denote by $[x,y]$ the segment of $\mathcal{L}$ between $x$ and $y$ (inclusive).  
We call $\mathcal{L}$ half-line when origin $o$ is either $a$ or $b$, that is, $\mathcal{L}$ is either only $\mathcal{L}_{right}$ or only $\mathcal{L}_{left}$.
%
%

\vspace{1mm}
\noindent{\bf Prediction Model.}
We denote a request $r_i\in \mathcal{R}$ by a triple $(a_i^{loc},p_i^{loc},t_i)$, where $a_i^{loc}$ is the actual point on $\mathcal{L}$ it arrives, $p_i^{loc}$ is the predicted point on $\mathcal{L}$ it arrives, and $t_i$ is the actual time it arrives. 
In our prediction model, the predicted location $p_i^{loc}$ of each request $r_i$ is known beforehand but not $a_i^{loc}$ and $t_i$ until $r_i$ arrives at $t_i$. With {\em prefect} predictions, $a_i^{loc}=p_i^{loc}$, i.e., $r_i$ arrives at the predicted location. In the original model, there is no notion of $p_i^{loc}$ known beforehand, i.e., request $r_i$ is known only when it comes at $t_i$ at location $a_i^{loc}$. 

\vspace{1mm}
\noindent{\bf Prediction Error.}
We denote by $\Delta$ the error in prediction. It is measured based on predicted and actual locations of each request $r_i$. For $r_i$, the error on prediction $\Delta_i=|a_i^{loc}-p_i^{loc}|$. For $\mathcal{R}$, we denote by $\Delta$ the maximum error on prediction, i.e., $\Delta:=\max_{i=1}^{n} \Delta_i$.  Notice that $\Delta\leq |\mathcal{L}|$, i.e., the error on prediction cannot be more than $|\mathcal{L}|$ since each request in $\mathcal{R}$ arrive on $\mathcal{L}$ ($a,b$ inclusive). Therefore, we can use $\delta$ as the error ratio parameter giving the amount of error w.r.t. $|\mathcal{L}|$, i.e., $0\leq \delta=\frac{\Delta}{|\mathcal{L}|}\leq 1$. $\delta=0$ denotes perfect prediction and $\delta=1$ denotes worst-case prediction error. Given $\delta$, $\Delta$ can be computed, and vice-versa, given known $|\mathcal{L}|$.

\begin{figure}[t]
    \centering
    \includegraphics[width=\linewidth]{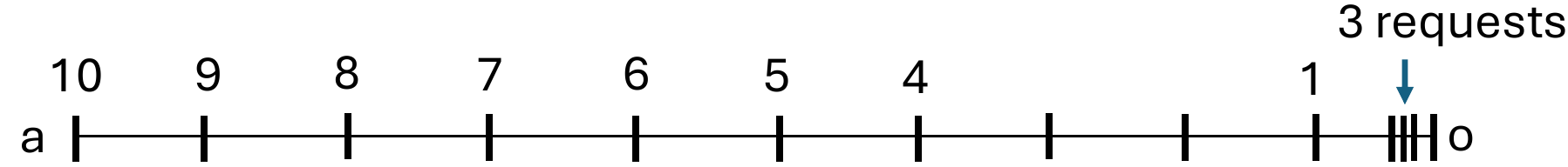}
    \caption{The lower bound construction on a half-line $\mathcal{L}$ with $|\mathcal{L}|=10$. The request locations are shown. The 8 requests at 1, 4--10 arrive at time $t=0$. The 3 remaining requests  arrive over time. }
    \label{fig:lower}
\end{figure}

\section{3-Competitive Lower Bound}
\label{section:lower-half}
Our 3-competitive lower bound is strong compared to the state-of-the-art $1+\sqrt{2}\approx 2.414$ \cite{feuerstein2001line}. We establish our lower bound
considering $\mathcal{L}$ being half-line, and hence applies to any metric satisfying triangle inequality. We construct an adversarial input $\mathcal{R}$ on half-line $\mathcal{L}$ such that no deterministic algorithm can satisfy  ratio $\frac{ON(r_i)}{OPT(r_i)}$ less than 3 for  at least a request $r_i$ in the prediction model even with perfect predictions.
Since the original model does not have predictions, the lower bound applies to the original model. 
In particular, we prove the following theorem.
\begin{theorem}[{\bf lower bound}]
    The exists an input $\mathcal{R}$ for which no deterministic algorithm can achieve 3-competitive ratio for online TRP in the  prediction model. 
\end{theorem}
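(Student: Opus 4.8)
The plan is to exhibit the single adversarial instance on the half-line $\mathcal{L}$ with $|\mathcal{L}|=10$ sketched in Figure~\ref{fig:lower} and argue that every deterministic online algorithm $ON$ is forced to incur $\frac{ON(r_i)}{OPT(r_i)}\ge 3$ on at least one request $r_i$. Because predictions are perfect, the server knows all eleven request locations from the outset, so the only adversarial freedom is the \emph{timing} of the three requests that arrive over time; I would fix their positions as in the figure and reveal them adaptively against the (deterministic, hence predictable) trajectory of $ON$. The first ingredient is a clean description of the offline optimum: on a half-line the tour minimizing the sum of completion times simply sweeps rightward at unit speed, so $|OPT\_Tour_i|$ equals the position of $r_i$ and hence $OPT(r_i)=\max\{p_i,t_i\}$. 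This closed form is what every ratio estimate will be measured against.

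Next I would track the position of $ON$ as a function of time and split on its trajectory. The eight requests released at $t=0$ (at $1$ and $4$--$10$) pin the server down: to keep the far request at $10$ competitive it must eventually sweep the whole cluster, yet it is free to interleave this with visits to the near region. Against whatever sweeping/waiting schedule $ON$ commits to, the adversary releases each over-time request at the instant $ON$ has just moved ``past'' its (known) location, forcing a choice between backtracking immediately and deferring service. In each branch I would compute the completion time $t_i'$ that $ON$ is committed to and compare it with $\max\{p_i,t_i\}$; the instance is tuned (this is where the constants $10$ and $1,4,\dots,10$ enter) so that the thresholds separating the branches all meet at the value $3$.

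The crux --- and the step I expect to be the main obstacle --- is that a \emph{single} forced backtrack only yields ratio $2$: if the server reaches position $x$ at time $x$ and a request is then revealed behind it at position $q$, it is served by time $2x-q$ while $OPT$ pays $\max\{q,x\}=x$, giving $(2x-q)/x\le 2$. The reason is structural: the arrival time $t_i$ sits in the denominator through $\max\{\cdot,t_i\}$, so any lateness the server can blame on a late arrival is simultaneously charged to $OPT$. To break past $2$ toward $3$ I would use the three over-time requests in concert rather than in isolation: an early backtrack forced near the origin makes the server's motion non-monotone (its elapsed time exceeds its displacement), and only then can a second, early-arriving request --- whose $OPT(r_i)=\max\{p_i,t_i\}$ stays small --- be served late enough that its ratio exceeds $3$; the third request is spent closing the sub-case in which $ON$ tries to ``escape'' by deferring rather than backtracking.

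Finally I would verify that the branches are exhaustive: every deterministic $ON$ either (i) defers a revealed request long enough that its individual ratio reaches $3$, or (ii) backtracks enough times that the accumulated wasted travel pushes the completion time of the far $t=0$ request at $10$ to at least $30=3\cdot 10$. Checking that these two cases cover all trajectories, and that the numeric slack never lets $ON$ keep every ratio strictly below $3$, is the delicate bookkeeping. Since the construction lives on a half-line it inherits the triangle inequality, so the bound transfers to general metrics and, a fortiori, to the original prediction-free model, as claimed.
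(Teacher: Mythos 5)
Your plan reproduces the paper's setup faithfully: the same lower-bound construction (eight requests at $1,4,\dots,10$ released at time $0$ on a half-line of length $10$, three requests at distance $\epsilon\ll 1$ from $o$ released adaptively), the same bound $OPT(r_i)\geq\max\{p_i,t_i\}$, and the same backtrack-versus-defer dilemma; your observation that a single forced backtrack can only produce ratio $2$ is correct and is exactly why the adversary must chain several releases against the far requests' deadlines. The problem is that your proposal stops where the proof has to start. For an adaptive lower bound of this kind, the entire mathematical content is the exhaustive case analysis over the deterministic server's trajectory --- how long it waits at $o$ (the paper's Cases 1--4), at which step $s$ it returns to position $1$ after each forced backtrack, and how far it ventures out between releases --- and you explicitly set this aside as ``delicate bookkeeping.'' The paper's proof consists precisely of that bookkeeping, and its branches are closed by two specific mechanisms you never identify: the constraint that every return to position $1$ must happen by step $9$ (else $r_4$ misses its deadline of $12$), and the excursion argument that if the server goes out as far as position $s$ before returning for a pending $\epsilon$-request, then the request at $s+1$ is completed no earlier than $4s+1 > 3(s+1)$ whenever $s>2$.

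Worse, the one concrete closing condition you do commit to is quantitatively insufficient. In your case (ii) you want accumulated backtracking to push the completion time of the request at $10$ past $30$. But the adversary has only three over-time requests, all at distance $\epsilon\ll 1$ from $o$; each forced backtrack from position $1$ wastes only about $2$ time units, so even after all three forced backtracks the server can sweep and reach $10$ by roughly time $16$, far below its deadline of $30$. The request at $10$ has $20$ units of slack and can never be the violated request; the violations must instead come from the tightly constrained near-far requests ($r_4$ with deadline $12$, or the request $r_{s+1}$ just beyond an excursion). Without these, your dichotomy (i)/(ii) is not exhaustive and the argument does not close --- indeed, this bookkeeping is genuinely delicate (the interplay of the three $\epsilon$-releases with the deadlines of $r_4,r_5,\dots$ is exactly where all the difficulty of the theorem lives), so deferring it leaves a real gap rather than a routine verification.
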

\begin{proof}
    For a deterministic algorithm $ON$ to be 3-competitive, we need to guarantee 
    $\frac{ON(r_i)}{OPT(r_i)}\leq 3$ for each request $r_i\in \mathcal{R}$.
    Suppose $r_i$ arrives at time $t_i$ on location $p_i\in \mathcal{L}$ with distance $l_i$ from $o$.   
    $OPT(r_i)\geq \max\{t_i,l_i\}$.

We consider 11 requests in $\mathcal{R}$ with 8 predicted locations at distance  $1,4,5,\ldots,{10}$ from $o$ and 3 predicted locations  at distance $\epsilon_i<<1$ from $o$. We assume that the locations are known at time $t=0$ (see Fig.~\ref{fig:lower}). 

Let $ON$ be a deterministic algorithm to serve the requests that arrive on the predicted locations.  We assume that, at time step $t=0$,  a request each arrives at $1,4,5,\ldots,10$.  
We denote the requests by $r_i$, where $i$ is the location, i.e., for the request arriving at 4, we denote it by $r_4$.
We will release requests at locations $\epsilon_i$ over time. For $ON$ to guarantee 3-competitive ratio,  the requests at $1,4,5,\ldots,{10}$ must be served by time step $3, 12, 15, 18, 21, 24, 27, 30$. 
We will show that there is at least a request  among $\{
r_1,r_4,r_5,\ldots,r_{10}\}$ for which $\frac{ON(r_i)}{OPT(r_i)}> 3$ for any deterministic algorithm $ON$. 

We consider four cases, Case 1--4, based on  whether $ON$ waits at origin $o$ and, if so, how many time steps.


\begin{itemize}

\item {\bf Case 1. $ON$ does not wait at $o$:} $ON$ serves $r_1$ (the request at distance 1) at step 1. We release $r_{\epsilon_1}$ at step 1. To have $\frac{ON(r_{\epsilon_1})}{OPT(r_{\epsilon_1})}\leq  3$, $r_{\epsilon_1}$ needs to be served by step 3. Since $ON$ is at $1$, and need to reach $\epsilon_1$ by step 3, we have the following. 
\begin{itemize}
\item {\bf Case 1.A $ON$ serves $r_{\epsilon_1}$ at step 2:}
After serving $r_{\epsilon_1}$, $OL$ has two choices: stay at ${\epsilon_1}$ or leave.
Until $ON$ stays at  ${\epsilon_1}$, we do not release any new request. Let $s>2$ be the step at which $ON$ reaches $1$ leaving ${\epsilon_1}$. We release $r_{\epsilon_2}$ at step $s$. Notice that $s$ must be $\leq 9$ since otherwise $r_4$ is not reached by step 12. $ON$ has two choices, 
\begin{itemize}
\item {\bf Case 1.A.i $ON$ serves  $r_{\epsilon_2}$ at step $s+1$}
We do not release any new request until $ON$ stays at  ${\epsilon_2}$ after serving $r_{\epsilon_2}$. Let $s'>s+2$ be the step at which $ON$ reaches $1$ leaving ${\epsilon_2}$. We release $r_{\epsilon_3}$ at step $s'$. Notice that $s'$ must be $\leq 9$ since otherwise $r_4$ is not reached by step 12.
\item {\bf Case 1.A.ii $ON$ serves $r_4,r_5,\ldots$ and returns to $r_{\epsilon_2}$ by step $3s$.} Since $s,s'\leq 9$, $ON$ can only visit up to $r_s$ ($r_{s'}$) to return to ${\epsilon_2}$ by step $3s$ ($3s'$). If doing so, $r_{s+1}$ ($r_{s'+1}$) takes at least $3s+s+1$ ($3s'+s'+1$) steps to be served, giving $\frac{ON(r_{s+1})}{OPT(r_{s+1})}> 3$ ($\frac{ON(r_{s'+1})}{OPT(r_{s'+1})}>3$). If $ON$ returns after visiting $r_{s+1}$ ($r_{s'+1}$) , then $r_{\epsilon_2}$ is not served before step $3s+1$ ($3s'+1$), giving $\frac{ON(r_{\epsilon_2})}{OPT(r_{\epsilon_2})}> 3$.   
\end{itemize}
\item {\bf Case 1.B $ON$ waits at $1$ until step 2 and serves $r_{\epsilon_1}$ at step 3:} After serving $r_{\epsilon_1}$, $ON$ has two choices as in Case 1.A. Until $ON$ stays at ${\epsilon_1}$, we do not release any new request. We know that $ON$ must leave ${\epsilon_1}$ and reach $1$ by step $4\leq s\leq 9$ to serve $r_4$ by time step 12. We release $r_{\epsilon_2}$ at step $s$. Then either Case 1.A.i or Case 1.A.ii applies. 
\item {\bf Case 1.C $ON$ waits at $1$ until step 3 or $ON$ goes to $2$ at step 2 and returns to ${\epsilon_1}$ by step 4:} We have that $\frac{ON(r_1)}{OPT(r_1)}> 3$ and we are done. 
\end{itemize}

\item {\bf Case 2 $ON$ waits at $o$ for 1 time step:} $r_1$ is served at step 2. We release $r_{\epsilon_1}$ at step 2. $ON$ must serve $r_{\epsilon_1}$ by step 6 to guarantee $\frac{ON(r_{\epsilon_1})}{OPT(r_{\epsilon_1})}\leq  3$.  $ON$ has the following choices:
\begin{itemize}
\item {\bf Case 2.A $ON$ serves $r_{\epsilon_1}$ at step 3, 4, 5, or 6 (waiting 0, 1, 2, 3 steps at $1$):} We do not release any new request until $r_{\epsilon_1}$ is served and $ON$ returns to $1$. Suppose $ON$ returns to $1$ at $4\leq s\leq 7$ after serving  $r_{\epsilon_1}$ in any step $[3,s-1]$. We release $r_{\epsilon_2}$ at step $s$ which must be served by $3s$. Again $ON$ has two choices.
\begin{itemize}
\item {\bf Case 2.A.i $ON$ serves  $r_{\epsilon_2}$ at step $s+1$:}
We do not release any new request until $ON$ stays at  ${\epsilon_2}$ after serving $r_{\epsilon_2}$. Let $s'\geq s+2$ be the step at which $ON$ reaches $1$ leaving ${\epsilon_2}$. We release $r_{\epsilon_3}$ at step $s'$. Notice that $s' \leq 9$ since otherwise $r_4$ is not reached by step 12.
\item {\bf Case 2.A.ii $ON$ serves $r_4,r_5,\ldots$ and returns to ${\epsilon_2}$ by step $3s$.} Since $s,s'\leq 7$, $ON$ can only visit up to $r_s$ ($r_{s'}$) to return to ${\epsilon_2}$ by step $3s$ ($3s'$). If doing so, $r_{s+1}$ ($r_{s'+1}$) takes at least $3s+s+1$ ($3s'+s'+1$) steps to be served, giving $\frac{ON(r_{s+1})}{OPT(r_{s+1})}> 3$ ($\frac{ON(r_{s'+1})}{OPT(r_{s'+1})}>3$). If $ON$ returns after visiting $r_{s+1}$ ($r_{s'+1}$), then $r_{\epsilon_2}$ is not served before step $3s+1$ ($3s'+1$), giving $\frac{ON(r_{\epsilon_2})}{OPT(r_{\epsilon_2})}> 3$.   
\end{itemize}

\item {\bf Case 2.B $ON$ attempts to serve $r_4$ and then $r_{\epsilon_1}$:}   $r_{\epsilon_1}$ cannot be served by step 6 if $r_4$ is served. 
\end{itemize}
\item {\bf Case 3 $ON$ waits at $o$ for 2 steps:} $r_1$ is served at step 3.  We release $r_{\epsilon_1}$ at step 3  which needs to be served by step 9 for $\frac{ON(r_{\epsilon_1})}{OPT(r_{\epsilon_1})}\leq  3$. 
\begin{itemize}
\item {\bf Case 3.A $ON$ serves $r_{\epsilon_1}$ at step 4, \ldots, 8 (waiting 0, \ldots, 4 steps at $1$):} We do not release any new request until $r_{\epsilon_1}$ is served and $ON$ returns to $1$. Suppose $ON$ returns to $1$ at $5\leq s\leq 9$ after serving  $r_{\epsilon_1}$ in any step $[4,s-1]$. We release $r_{\epsilon_2}$ at step $s$ which must be served by $3s$. $ON$ again has two choices.
\begin{itemize}
\item {\bf Case 3.A.i. Similar to Case 2.A.i $ON$ serves $r_{\epsilon_2}$ at step $s+1$:} Let $s'\geq s+2$  be the step at which $ON$ is at $1$ after serving $r_{\epsilon_2}$. $ON$ must reach $1$ by $s'\leq 9$ since otherwise $r_4$ is not reached by step 12.
\item {\bf Case 3.A.ii $ON$ serves $r_4,r_5,\ldots$ and returns to ${\epsilon_2}$ by step $3s$.} Since $s,s'\leq 9$, $ON$ can only visit up to $s$ ($s'$) to return to ${\epsilon_2}$ by step $3s$ ($3s'$). If doing so, we reach Case 2.A.ii.
\end{itemize}

\item {\bf Case 3.B Attempt to serve $r_4$ and then $r_{\epsilon_1}$:}   $r_{\epsilon_1}$ cannot be served by step 6 if $r_4$ is served. 
\end{itemize}
\item {\bf Case 4. $OL$ waits at $o$ for 3 steps or more:} We have that $\frac{ON(r_1)}{OPT(r_1)}> 3$ and we are done. 
\end{itemize}
Therefore, there is a request in $\mathcal{R}$ which cannot be served with competitive ratio $\leq 3$ by any deterministic algorithm. 
\end{proof}

We have the following corollary for the lower bound on the original model.  The proof is immediate since above lower bound is in the prediction model and the prediction model is stronger than the original model. 
\begin{corollary}
    No deterministic algorithm can achieve 3-competitive ratio for online TRP in the original model.
\end{corollary}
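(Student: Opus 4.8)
The plan is to obtain the corollary as a direct reduction from the preceding theorem, exploiting the fact that the prediction model is at least as informative to the online algorithm as the original model. The central observation I would make is that every deterministic original-model algorithm embeds into the prediction model: given any original-model algorithm $ON'$, define a prediction-model algorithm $ON$ that simply discards the predicted locations $p_i^{loc}$ and chooses its trajectory solely from the requests already observed (their actual locations and arrival times) together with its current position. Thus the behavior of any original-model algorithm is realizable by some prediction-model algorithm, and a lower bound proved for all prediction-model algorithms must in particular cover these embedded ones.

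First I would fix an arbitrary deterministic original-model algorithm $ON'$ and let $ON$ be the prediction-model algorithm just described, which mimics $ON'$ step for step while ignoring predictions. On any request stream $\mathcal{R}$, since $ON$ never consults the predictions, its server trajectory, and hence every completion time $t_i'$, coincides with that of $ON'$ on the same $\mathcal{R}$; consequently the two algorithms yield identical per-request ratios $\frac{ON(r_i)}{OPT(r_i)}$.

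Next I would invoke the theorem, which furnishes an adaptive adversary that, against any deterministic prediction-model algorithm, constructs an input $\mathcal{R}$ (with perfect predictions) forcing $\frac{ON(r_i)}{OPT(r_i)} > 3$ for at least one request $r_i$. Running this adversary against the algorithm $ON$ defined above produces such an $\mathcal{R}$. Deleting the (now irrelevant) predictions from $\mathcal{R}$ leaves a legitimate original-model instance with the same requests, arrival times, and actual locations, and by the coincidence established above $ON'$ incurs the identical violating ratio on it. Since $ON'$ was arbitrary, no deterministic original-model algorithm can be $3$-competitive.

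The one point requiring care — and the step I expect to be the main obstacle — is that the adversary of the theorem is adaptive: the release times of the requests at $\epsilon_1,\epsilon_2,\epsilon_3$ depend on where the server has moved. I would dispel this concern by a simple coupling argument: the adversary branches only on the server's observable trajectory, and by induction on the arrival steps $ON'$ and $ON$ have seen exactly the same requests so far and therefore occupy the same position, so the adversary releases the identical next request against both. Hence the two executions remain in lockstep, the adversary drives them into the same case of the theorem's case analysis, and the forced ratio $>3$ transfers verbatim from the prediction model to the original model.
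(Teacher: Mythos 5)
Your proposal is correct and follows essentially the same route as the paper, which disposes of the corollary with the one-line observation that the prediction model is stronger than the original model; your embedding of an arbitrary original-model algorithm as a prediction-ignoring prediction-model algorithm, together with the coupling argument handling the adaptive adversary, is simply the careful spelling-out of that same reduction.
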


\section{$(2+\sqrt{3})$-Competitive Deterministic Algorithm on a Half-Line}
\label{section:upper-half}

We develop here a $2+\sqrt{3}\approx 3.732$-competitive deterministic algorithm for online TRP on a half-line $\mathcal{L}$. 
The algorithm works in both original and prediction models.  This algorithm will be a  fundamental building block for our line algorithm presented in the next section. 

\begin{algorithm}[t!]
\footnotesize
\begin{algorithmic}[1]
\STATE starting at time step $0$, server traverses $\mathcal{L}$ in round trips   $RT_j,j\geq 1,$ starting from and ending at origin $o$ until there is at least an outstanding request.
\IF {$j=1$} 
\STATE visit segment of length $\frac{2+2\alpha}{2}$ from $o$.
\ELSE
\STATE visit segment of length $\frac{(2+2\alpha)^{j-1}(1+2\alpha)}{2}$ from $o$. 
\ENDIF
\caption{Algorithm on a half-line  $\mathcal{L}$}
 \label{algorithm:half-line}
 \end{algorithmic}
\end{algorithm}

\begin{figure}[t!]
    \centering
    \includegraphics[width=\linewidth]{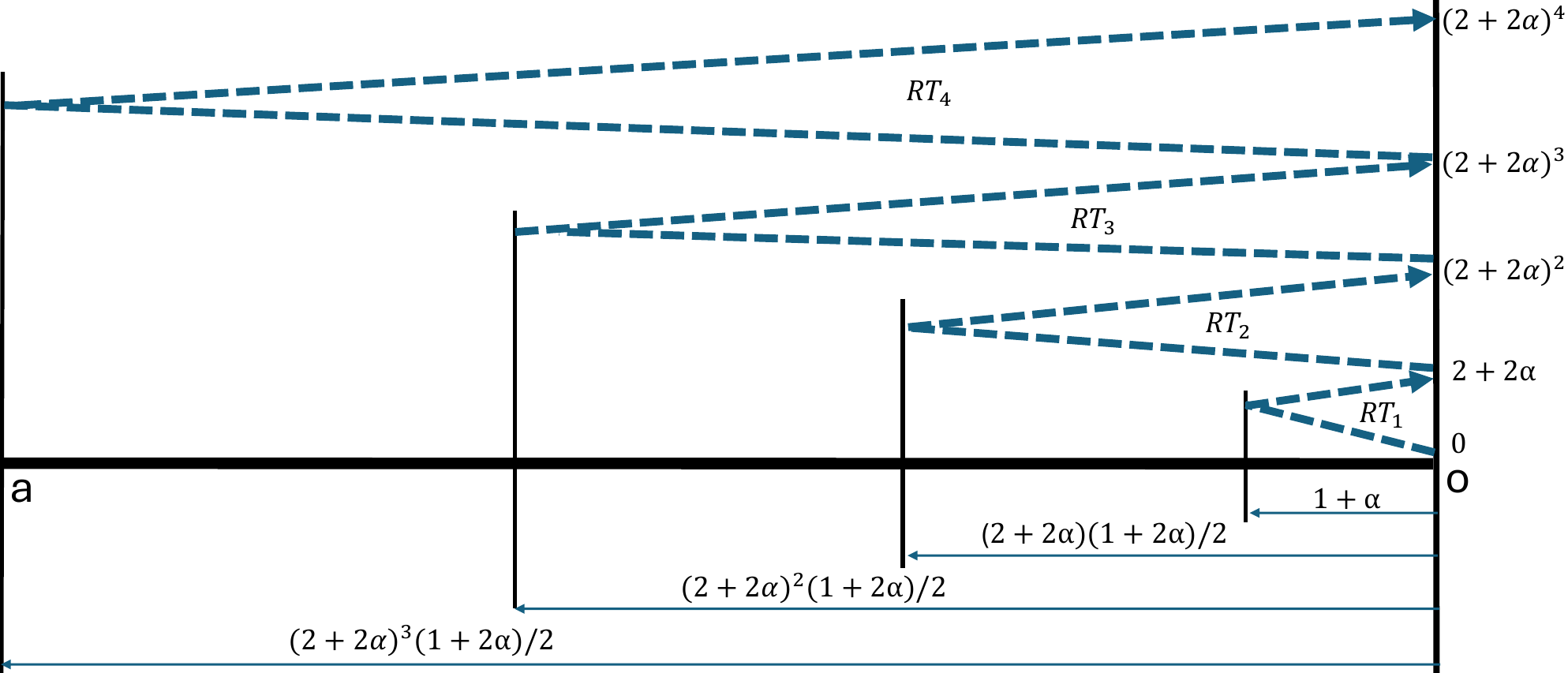}
    \caption{An illustration of how Algorithm \ref{algorithm:half-line} visits half-line $\mathcal{L}$ in round-trips $RT_j, j\geq 1,$ starting from and ending at origin $o$. The segment lengths of $\mathcal{L}$ visited in each $RT_j$ is also shown at bottom.}
    \label{fig:halfline}
\end{figure}

Our algorithm $ON$ is outlined in Algorithm \ref{algorithm:half-line} and illustrated in Fig.~\ref{fig:halfline}. Server is initially at origin $o$, which is the one end of $\mathcal{L}$ (and hence the half-line). Starting at time step $0$, server visits $\mathcal{L}$ in round trips. In the first round trip $RT_1$, it visits the segment of $\mathcal{L}$ of length $\frac{2+2\alpha}{2}$ and returns to $o$, where $\alpha$ is some constant (we set its value later). Therefore, the length of $RT_1$ is $|RT_1|=2+2\alpha$. 
For $RT_j,j\geq 2$,  it visits the segment of length $\frac{(2+2\alpha)^{j-1}(1+2\alpha)}{2}$ with $|RT_j|=(2+2\alpha)^{j-1}(1+2\alpha)$.
The server stops 
as soon as  there is no outstanding request. 

We now analyze Algorithm \ref{algorithm:half-line} for its correctness and competitive ratio guarantees. Correctness means that the server eventually serves all the requests no matter when and where that arrive. This is immediate since each round trip is of increasing length starting and ending at $o$ and hence all the request locations on $\mathcal{L}$ are eventually visited. For the competitive ratio, we establish the following theorem.

\begin{theorem}
\label{theorem:half-line}
Algorithm \ref{algorithm:half-line} is $\max\{2+2\alpha, \frac{5+6\alpha}{1+2\alpha}\}$-competitive for online TRP on a half-line $\mathcal{L}$ in the original model. 
\end{theorem}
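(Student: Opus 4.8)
\emph{Proof plan.} Since the paper has already reduced the overall competitive ratio to the per-request ratio $t_i'/\max\{l_i,t_i\}$, the plan is to fix a single request $r_i$ at distance $l_i$ arriving at time $t_i$, determine exactly when Algorithm~\ref{algorithm:half-line} serves it, and show $t_i'/\max\{l_i,t_i\}\le\max\{2+2\alpha,\tfrac{5+6\alpha}{1+2\alpha}\}$. First I would nail down two structural facts. Because $\mathcal{L}$ is a half-line and $OPT$ sees all requests at time $0$, $OPT$ sweeps monotonically outward, so $OPT(r_i)=\max\{l_i,t_i\}$. Then, writing $\beta=2+2\alpha$ and using $1+2\alpha=\beta-1$, the round-trip lengths $|RT_1|=\beta$ and $|RT_j|=\beta^{\,j-1}(\beta-1)$ make the start time $T_j=\sum_{k<j}|RT_k|$ telescope to the clean form $T_1=0$, $T_j=\beta^{\,j-1}$ for $j\ge2$. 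The reach of $RT_j$ is $d_j=|RT_j|/2$, and each $RT_j$ that reaches $l_i$ passes it twice: outbound at time $T_j+l_i$ and inbound at time $T_{j+1}-l_i$; the request is served at the first such pass at or after $t_i$.

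Let $j^\star$ be the first round trip reaching $l_i$, so $d_{j^\star-1}<l_i\le d_{j^\star}$. I would then bound the ratio in three cases according to the serving pass. (A) Served outbound in $RT_{j^\star}$: here $t_i'=T_{j^\star}+l_i$, and $OPT(r_i)\ge l_i>d_{j^\star-1}$ gives ratio $\le 1+T_{j^\star}/d_{j^\star-1}$, which is $\le1$ for $j^\star=1$, $<3$ for $j^\star=2$, and exactly $\tfrac{5+6\alpha}{1+2\alpha}$ for $j^\star\ge3$ (since $T_{j^\star}/d_{j^\star-1}=2\beta/(\beta-1)$). (B) Served inbound in $RT_j$: then $t_i'=T_{j+1}-l_i$ while $t_i>T_j+l_i$, so the ratio is below $\tfrac{\beta^{j}-l_i}{\beta^{j-1}+l_i}$, a decreasing function of $l_i$ whose supremum as $l_i\downarrow0$ is exactly $\beta=2+2\alpha$ when $j\ge2$. (C) Served outbound in a later $RT_j$ ($j>j^\star$): then $t_i'=T_j+l_i$ with $t_i>T_j-l_i$, and since $l_i\le d_{j-1}<T_j/2$ the ratio stays at most $\tfrac{T_j+l_i}{T_j-l_i}\le3$. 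Because $\tfrac{5+6\alpha}{1+2\alpha}>3$ for every $\alpha$, the two binding cases are (A) with $j^\star\ge3$ and (B) with $j\ge2$, which yield precisely the two terms of the claimed maximum.

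The hard part will be the inbound case of the \emph{first} round trip ($j=1$ in (B)), where $T_1=0$ makes the generic bound degenerate to $(\beta-l_i)/l_i$, which blows up as $l_i\downarrow0$. This is exactly where the model's discreteness of arrival times must be invoked: a request served inbound in $RT_1$ satisfies $t_i>l_i$, hence $t_i\ge1$, so $OPT(r_i)\ge1$ while $t_i'=\beta-l_i<\beta$, keeping the ratio below $\beta$. I expect this first-round-trip corner---reconciling arbitrarily small distances with integer arrival times---to be the main obstacle. Once it is dispatched and the telescoped times $T_j=\beta^{\,j-1}$ together with the two-passes-per-trip observation are in place, the remaining case bookkeeping is routine, and the choice $\alpha=\sqrt3/2$ that balances the two terms to give $2+\sqrt3$ I would defer to the corollary that instantiates $\alpha$.
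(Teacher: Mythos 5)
Your proposal is correct, and it reaches the claimed bound by the same underlying mechanism as the paper's proof: a per-request comparison of the serving time against the lower bound $\max\{l_i,t_i\}$, the telescoped round-trip start times $T_j=(2+2\alpha)^{j-1}$, the ratio $\frac{5+6\alpha}{1+2\alpha}$ when the location lower bound binds, the ratio $2+2\alpha$ when the arrival-time lower bound binds, and integrality of arrival times ($t_i\geq 1$) to dispose of the corner case of a small $l_i$ with a late arrival (the paper uses exactly this in its Case 2.1). What is genuinely different is the decomposition. The paper splits on $l_i\geq t_i$ versus $l_i<t_i$ and then on numeric ranges of $l_i$ (its Cases 1.1--1.3) or of $t_i$ (its Cases 2.1--2.4); you instead split on which pass of which round trip serves the request: first outbound reach (your A), an inbound pass (your B), or a later outbound pass (your C). Your scheme buys two things. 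First, it makes explicit the inequality the paper's Case 2 uses only implicitly: if $r_i$ was not served at an earlier pass, then $t_i$ exceeds the time of that pass ($t_i>T_j+l_i$, respectively $t_i>T_j-l_i$), which is precisely what drives the $2+2\alpha$ and $3$ bounds, whereas the paper asserts claims such as ``$r_i$ will be served in $RT_2$ by time $t_i+(2+2\alpha)$'' (Case 2.2) and compresses Case 2.4 to ``similar to Case 2.2.'' Second, your cases are exhaustive by construction, since every request is served at some pass, while the paper's range-based cases require separately checking coverage of all parameter values. Two cosmetic points: in your case (C) with $j=2$ one has $d_1=T_2/2$ exactly rather than $d_{j-1}<T_j/2$, but the bound $\frac{T_j+l_i}{T_j-l_i}\leq 3$ still holds; and in case (A) the value $\frac{5+6\alpha}{1+2\alpha}$ is a supremum, not attained. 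Neither affects correctness.
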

\begin{proof}
%

Let $r_i=(l_i,t_i)$ be the request that arrives at distance $l_i$ from $o$ at time step $t_i\geq 0$. We have that $OPT(r_i)\geq \max\{l_i,t_i\}$. Let $ON(r_i)$ be the time at which $ON$ (Algorithm \ref{algorithm:half-line}) serves $r_i$.
We show that $ON(r_i)\leq \max\{2+2\alpha, \frac{5+6\alpha}{1+2\alpha}\} \cdot OPT(r_i).$

We consider two cases. The first case deals with $l_i\geq t_i$ and the second case deals with $l_i<t_i$.  

{\bf Case 1: $l_i\geq t_i$:} In this case, $OPT(r_i)\geq l_i$. Since $t_i\leq l_i,$ $l_i\geq t_i\geq 0$. We prove here that $ON(r_i)\leq 
\frac{5+6\alpha}{1+2\alpha} 
OPT(r_i).$ 

We consider three cases of $l_i$ separately:

\[
  l_i =
  \begin{cases}
    [0,1+\alpha] & 
    \textbf{Case 1.1}\\
    ((1+\alpha), \frac{(2+2\alpha) (1+2\alpha)}{2}] & 
    \textbf{Case 1.2}\\
    (\frac{(2+2\alpha)^{j-2}(1+2\alpha)}{2},\frac{(2+2\alpha)^{j-1}(1+2\alpha)}{2}] & 
    \textbf{Case 1.3}\\
  \end{cases}
\]

{\bf Case 1.1:} Consider first the case of $l_i\in [0,(1+\alpha)]$. 
Since $l_i\geq t_i$, $r_i$ must have arrived by the time $ON$ reaches $l_i$ during the first round-trip $RT_1$. That is, $r_i$ must have arrived at time step $t_i\leq \lfloor l_i\rfloor$, otherwise $l_i\notin [0,(1+\alpha)]$.  Therefore $r_i$ must be served by $ON$ while going from $o$ to $1+\alpha$ giving competitive ratio 1.

{\bf Case 1.2:} Now consider the second case of $l_i\in  ((1+\alpha), \frac{(2+2\alpha) (1+2\alpha)}{2}]$.
$OL$ visits $l_i$ for the first time in the second round-trip $RT_2$ since $t_i\leq l_i$. This gives $ON(r_i)\leq l_i+|RT_1|\leq l_i+(2+2\alpha)$. 
Therefore, $\frac{OL(r_i)}{OPT(r_i)}=\frac{l_i+(2+2\alpha)}{l_i}.$
The value of $\frac{ON(r_i)}{OPT(r_i)}$ is maximized picking $l_i=(1+\alpha)$ when $l_i\in  ((1+\alpha), \frac{(2+2\alpha) (1+2\alpha)}{2}].$
Therefore, $$\frac{ON(r_i)}{OPT(r_i)}=\frac{(1+\alpha)+(2+2\alpha)}{(1+\alpha)}\leq \frac{3(1+\alpha)}{(1+\alpha)}\leq 3.$$

{\bf Case 1.3:} We now consider the third and final case of  $l_i\in  (\frac{(2+2\alpha)^{j-2}(1+2\alpha)}{2},\frac{(2+2\alpha)^{j-1}(1+2\alpha)}{2}]$, which happens in $RT_j, j\geq 3$.
$ON$ visits $l_i$ for the first time in $RT_j$ since $t_i\leq l_i$. This gives $ON(r_i)\leq l_i+\sum_{k=1}^{j-1}(|RT_{k}|)\leq l_i+(2+2\alpha)^{j-1}$. 
Thus, $\frac{ON(r_i)}{OPT(r_i)}=\frac{l_i+(2+2\alpha)^{j-1}}{l_i}.$
Again, the value of $\frac{ON(r_i)}{OPT(r_i)}$ is maximized picking $l_i=\frac{(2+2\alpha)^{j-2}(1+2\alpha)}{2}$ when $l_i\in  (\frac{(2+2\alpha)^{j-2}(1+2\alpha)}{2},\frac{(2+2\alpha)^{j-1}(1+2\alpha)}{2}].$
Therefore, $$\frac{ON(r_i)}{OPT(r_i)}=\frac{\frac{(2+2\alpha)^{j-2}(1+2\alpha)}{2}+(2+2\alpha)^{j-1}}{\frac{(2+2\alpha)^{j-2}(1+2\alpha)}{2}}\leq \frac{5+6\alpha}{1+2\alpha}.$$

{\bf Case 2: $l_i< t_i$:} In this case, $OPT(r_i)\geq t_i$. Since $t_i> l_i,$ $t_i\geq l_i\geq 0$. We prove that $ON(r_i)\leq \max\{2+2\alpha, \frac{5+6\alpha}{1+2\alpha}\} OPT(r_i).$ 

We now consider cases of $t_i$ separately:

\[
  t_i =
  \begin{cases}
    [0,1+\alpha] & 
    \textbf{Case 2.1} 
    \\
    ((1+\alpha), (2+2\alpha)] & 
    \textbf{Case 2.2}
    \\
    ((2+2\alpha)^{j-1}, \frac{(2+2\alpha)^{j-1}(3+2\alpha)}{2} ]  & 
    \textbf{Case 2.3}\\
    (\frac{(2+2\alpha)^{j-1}(3+2\alpha)}{2},(2+2\alpha)^j]  & 
    \textbf{Case 2.4} 
  \end{cases}
\]

{\bf Case 2.1:} For $t_i \in [0,1+\alpha]$, 
$r_i$ must arrive by step $\lfloor 1+\alpha \rfloor$. $ON$ is at location $1+\alpha$ from $o$ at step $\lfloor 1+\alpha \rfloor$. 
If $r_i$'s location is between $o$ and $1+\alpha$,  we have two cases. Either $t_i=0$ or $t_i\geq 1$. 
If $t_i=0$ and $l_i=0$, $r_i$ is served immediately at step $0$. If $t_i\geq 1$, then it will be served by time step $(2+2\alpha)$. Therefore, $ON(r_i)\leq (2+2\alpha) \cdot OPT(r_i)$. 
If $r_i$'s location is not between between $o$ and $1+\alpha$, {\bf Case 1} applies since $l_i\geq t_i$, giving the competitive ratio $ON(r_i)\leq \frac{5+6\alpha}{1+2\alpha} OPT(r_i)$.  

{\bf Case 2.2:} For $t_i \in ((1+\alpha), (2+2\alpha)]$, 
$r_i$ must arrive by $(2+2\alpha)$.
If $r_i$'s location is between $o$ and $1+\alpha$ (inclusive), then it will be served by $t_i+2+2\alpha$. We have that $OPT(r_i)\geq t_i\geq  (1+\alpha)$. Therefore, $ON(r_i)/OPT(r_i)\leq 3$.
If $r_i$'s location is between $1+\alpha$ and $2+2\alpha$ from $o$, we have that $r_i$ is arrived by $2+2\alpha$. The server may be at $o$ when $r_i$ arrives. Additionally, $t_i>l_i$. Therefore, $ON$ can reach $r_i$'s location from $o$ traveling at most distance $t_i$. 
Thus, $r_i$ will be served by $ON$ in the second round-trip $RT_2$ by time $t_i+(2+2\alpha)$. 
If $r_i$'s location is not between $o$ and $2+2\alpha$ (inclusive),
{\bf Case 1} applies since $l_i\geq t_i$, giving the competitive ratio $ON(r_i)\leq \frac{5+6\alpha}{1+2\alpha} OPT(r_i)$.  

{\bf Case 2.3:}
For $t_i \in ((2+2\alpha)^{j-1}, \frac{(2+2\alpha)^{j-1}(3+2\alpha)}{2} ]$,
$r_i$ must arrive by time step $\lfloor \frac{(2+2\alpha)^{j-1}(3+2\alpha)}{2}\rfloor$. $ON$ is at $\frac{(2+2\alpha)^{j-1}(3+2\alpha)}{2}$ at time step $\lfloor \frac{(2+2\alpha)^{j-1}(3+2\alpha)}{2} \rfloor$. 
If $r_i$'s location is between $o$ and $\frac{(2+2\alpha)^{j-1}(3+2\alpha)}{2}$,   it will be served by time $(2+2\alpha)^j$. 
Additionally,
$OPT(r_i)\geq (2+2\alpha)^{j-1}.$
Therefore, $ON(r_i)\leq (2+2\alpha) \cdot OPT(r_i)$. 
If $r_i$'s location is not between $o$ and $\frac{(2+2\alpha)^{j-1}(3+2\alpha)}{2}$, we have that $l_i\geq t_i$, and {\bf Case 1} applies, giving the competitive ratio $ON(r_i)\leq \frac{5+6\alpha}{1+2\alpha} OPT(r_i)$. 

{\bf Case 2.4:} This case is similar to Case 2.2 and the competitive ratio becomes either $2+2\alpha$ or $\frac{5+6\alpha}{1+2\alpha}.$

Combining all the competitive ratios, we have the claimed $\max\{2+2\alpha, \frac{5+6\alpha}{1+2\alpha}\}$ competitive ratio for Algorithm \ref{algorithm:half-line} for online TRP on a half-line $\mathcal{L}$. 
\end{proof}

\begin{corollary}
Setting $\alpha=\frac{\sqrt{3}}{2}$, Algorithm \ref{algorithm:half-line} becomes $(2+\sqrt{3})\approx 3.732$-competitive for online TRP on a half-line $\mathcal{L}$.  
\end{corollary}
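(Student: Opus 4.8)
The plan is to derive the corollary directly from Theorem~\ref{theorem:half-line}, which states that Algorithm~\ref{algorithm:half-line} is $\max\{2+2\alpha,\ \frac{5+6\alpha}{1+2\alpha}\}$-competitive for any constant $\alpha$. Since $\alpha$ is a free parameter at our disposal, the natural strategy is to choose it so as to minimize this maximum. I would first observe that the two competing expressions have opposite monotonicity in $\alpha$: the term $2+2\alpha$ is strictly increasing, while $\frac{5+6\alpha}{1+2\alpha}$ is strictly decreasing (it falls from $5$ at $\alpha=0$ toward the limit $3$ as $\alpha\to\infty$). Consequently the pointwise maximum of the two is minimized exactly where the two curves cross, and at that crossing point the max equals the common value.

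Next I would locate the crossing by setting the two expressions equal. Clearing the denominator in $2+2\alpha=\frac{5+6\alpha}{1+2\alpha}$ gives $(2+2\alpha)(1+2\alpha)=5+6\alpha$, which after expansion collapses to $4\alpha^2=3$, so the positive root is $\alpha=\frac{\sqrt{3}}{2}$. This identifies the claimed value of $\alpha$ as the optimal balance point rather than an arbitrary choice.

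Finally I would substitute $\alpha=\frac{\sqrt{3}}{2}$ back into either expression to read off the ratio. The increasing term gives $2+2\alpha=2+\sqrt{3}$ immediately. For the decreasing term, $\frac{5+6\alpha}{1+2\alpha}=\frac{5+3\sqrt{3}}{1+\sqrt{3}}$, and rationalizing by multiplying numerator and denominator by $\sqrt{3}-1$ yields $\frac{(5+3\sqrt{3})(\sqrt{3}-1)}{2}=\frac{4+2\sqrt{3}}{2}=2+\sqrt{3}$, confirming that both terms coincide at $2+\sqrt{3}$. Hence $\max\{2+2\alpha,\ \frac{5+6\alpha}{1+2\alpha}\}=2+\sqrt{3}\approx 3.732$, as required.

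I expect no genuine obstacle here, since the argument reduces to elementary algebra once Theorem~\ref{theorem:half-line} is granted. The only point meriting a word of care is the monotonicity observation that justifies equalizing the two terms: without it, solving $2+2\alpha=\frac{5+6\alpha}{1+2\alpha}$ would merely locate a crossing but not certify that it minimizes the maximum. Noting that the maximum of an increasing function and a decreasing function is minimized at their intersection closes that gap cleanly.
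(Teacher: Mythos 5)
Your proposal is correct and follows essentially the same route as the paper: equate $2+2\alpha$ with $\frac{5+6\alpha}{1+2\alpha}$, reduce to $4\alpha^2=3$, and substitute $\alpha=\frac{\sqrt{3}}{2}$ back to see both terms equal $2+\sqrt{3}$. Your explicit monotonicity argument (increasing vs.\ decreasing terms) is a small but welcome addition that the paper leaves implicit when it asserts the maximum is minimized at the crossing point; otherwise the two proofs differ only in how the final fraction is simplified (rationalization versus the factorization $5+3\sqrt{3}=(2+\sqrt{3})(1+\sqrt{3})$).
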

\begin{proof}
    The competitive ratio is minimized when 
    \begin{alignat*}{2}
  && 2+2\alpha &=\frac{5+6\alpha}{1+2\alpha}\notag\\
  &\Rightarrow\quad
  &(2+2\alpha)(1+2\alpha)&=5+6\alpha \notag\\
  &\Rightarrow\quad
  &2+4\alpha+2\alpha+4\alpha^2&=5+6\alpha \notag\\
  &\Rightarrow\quad
  &2+6\alpha+4\alpha^2-5-6\alpha &=0 \notag\\
  &\Rightarrow\quad
  &4\alpha^2-3 &=0 \notag\\
  &\Rightarrow\quad
  &4\alpha^2 &=3 \notag\\
  &\Rightarrow\quad
  &\alpha^2 &=\frac{3}{4} \notag\\
  &\Rightarrow\quad
  &\alpha &=\frac{\sqrt{3}}{2}\notag\\
\end{alignat*}
Therefore, setting $\alpha=\frac{\sqrt{3}}{2}$, $2+2\alpha=2+2\frac{\sqrt{3}}{2}=2+\sqrt{3}$ and 
$\frac{5+6\alpha}{1+2\alpha}=\frac{5+6\frac{\sqrt{3}}{2}}{1+2\frac{\sqrt{3}}{2}}=\frac{5+3\sqrt{3}}{1+\sqrt{3}}=2+\sqrt{3},$
since $5+3\sqrt{3}=(2+\sqrt{3})(1+\sqrt{3})$.

That is, Algorithm \ref{algorithm:half-line} is $(2+\sqrt{3})\approx 3.732$-competitive. 
\end{proof}

\section{$(2+\sqrt{3})$-Competitive Algorithm on a Line with Perfect Predictions}
\label{section:upper-line-0}




\vspace{2mm}
\noindent{\bf Highlevel Overview.}
We build our algorithm on two ideas. The first idea is Afrati {\it et al.}'s  algorithm \cite{afrati1986complexity} (which we call $AFRATI\_ALG$)  for offline TRP. 
In offline TRP, the request set $\mathcal{R}$ is known to the algorithm beforehand. 
Knowing $\mathcal{R}$, $AFRATI\_ALG$ computes a tour $OFF\_TOUR$ on $\mathcal{L}$ to visit the request locations in $\mathcal{R}$ (Fig.~\ref{fig:tour}). 
%
\begin{figure}[t!]
    \centering
    \includegraphics[width=\linewidth]{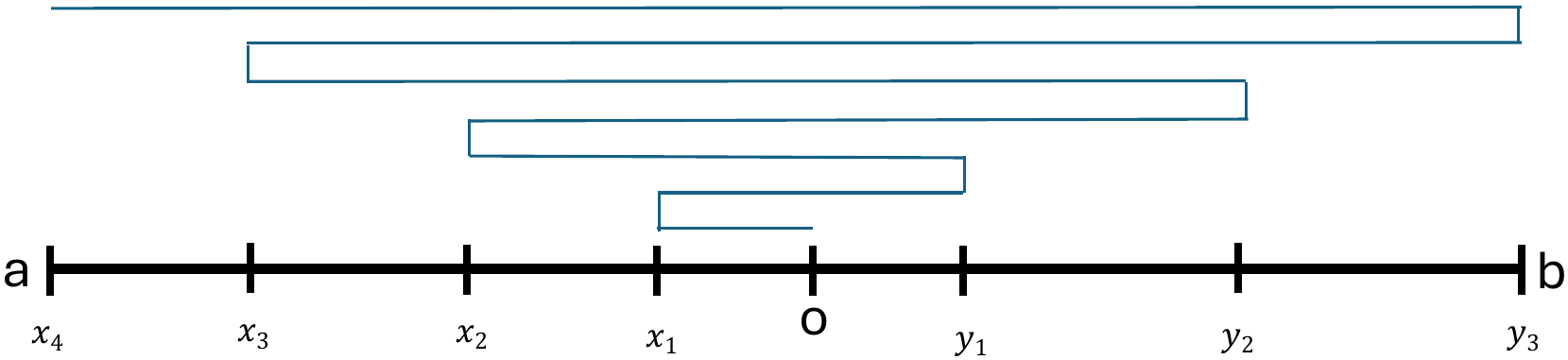}
    \caption{An illustration of $OFF\_TOUR$ for offline TRP or $PRED\_TOUR$ of predicted locations for online TRP, computed using $AFRATI\_ALG$.  The points on $\mathcal{L}$ where the tour changes direction are denoted by $x_i$ on $\mathcal{L}_{left}$ and $y_i$ on $\mathcal{L}_{right}$. }
    \label{fig:tour}
\end{figure}
\begin{figure}[t!]
    \centering
    \includegraphics[width=\linewidth]{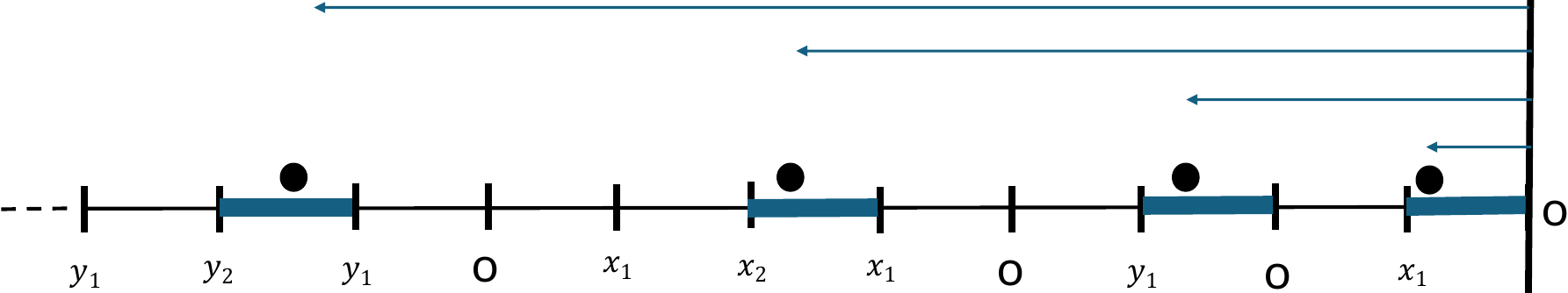}
    \caption{An alternative illustration of $OFF\_TOUR$  or $PRED\_TOUR$ of Fig.~\ref{fig:tour} as a half-line starting from $o$ at the right end (despite it visits $o$ multiple times). Consider any request $r_i$ (denoted as black circle) that arrives at a location on the bold intervals. We have that $OPT(r_i)\geq \{|OFF\_TOUR_i|,t_i\}$, where $|OFF\_TOUR_i|$ is the length of $OFF\_TOUR$ starting from $o$ (right end) to the position $p_i$ of $r_i$ (shown as a arrow line).}
    \label{fig:tour2}
\end{figure}
In the prediction model, given the predicted locations $p^i_{loc}$ of the requests in $\mathcal{R}$, we compute a $PRED\_TOUR$ of the predicted locations applying $AFRATI\_ALG$.
In the prediction model, since requests arrive over time, traversing $PRED\_TOUR$ once may be not enough to serve requests. Additionally, the competitive ratio might become prohibitively large if $PRED\_TOUR$ is not traversed carefully. This is where the second idea helps, that is, use the half-line algorithm (Algorithm \ref{algorithm:half-line}) to visit the $PRED\_TOUR$ so that the competitive ratio can be controlled. 

\vspace{2mm}
\noindent{\bf $AFRATI\_ALG$ \cite{afrati1986complexity}:}
$AFRATI\_ALG$ \cite{afrati1986complexity} is an  offline algorithm meaning that 
the input $\mathcal{R}$ is known beforehand. $AFRATI\_ALG$ computes $OFF\_TOUR$ (Fig.~\ref{fig:tour}) starting from $o$  
picking a direction $left$  or $right$ (from $o$). Suppose $AFRATI\_ALG$ picks direction $left$ first (picking direction $right$ first is analogous). 
Let $x_*$ ($y_*$) denote the point on $\mathcal{L}_{left}$ ($\mathcal{L}_{right}$). Using this notation, the origin $o$ can be written as $(x_0,y_0)$. The leftmost (rightmost) endpoint on $\mathcal{L}_{left}$ ($\mathcal{L}_{right}$) can be written as $x_m=a$ ($y_{m'}=b$). 
The server $s$ visits $\mathcal{L}$ from $o=(x_0,y_0)$ alternating between left and right directions.
Suppose $s$ is currently traversing $left$. Suppose $s$ changes direction at point $x_i$ on $\mathcal{L}_{left}$. After changing direction, it visits  $\mathcal{L}_{right}$ in the $right$ direction. Let $y_i$ be the point on $\mathcal{L}_{right}$ at which $s$  again changes direction. This change of direction continues until either $x_m=a$ is visited after visiting $y_m'=b$ or $y_m'=b$ is visited after visiting $x_m=a$. $[x_i,y_i]$ ( $[y_i,x_{i+1}]$) denotes the segment of $\mathcal{L}$ server $s$ visits in its right (left) direction. 

Let $r_i$ be a request that arrives in the interval $(x_j,x_{j+1}]$  or $(y_j,y_{j+1}]$ (with $j\geq 0$) of $OFF\_TOUR$ with $x_0=y_0=o$. Fig.~\ref{fig:tour2} provides an illustration. $AFRATI\_ALG$ guarantees that $OPT(r_i)=|OFF\_TOUR_i|$, where $|OFF\_TOUR_i|$  is the length of $OFF\_TOUR$ from $o$ until the position $p_i$ of $r_i$. In offline TRP, there is no request to serve between two subsequent bold intervals. 
We have the following theorem. 

\begin{theorem}[\cite{afrati1986complexity}]
$AFTRATI\_ALG$ is optimal  (competitive ratio 1) for  offline TRP on line $\mathcal{L}$. 
\end{theorem}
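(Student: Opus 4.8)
The plan is to show that the minimum-latency tour on a line always lies in the ``zigzag'' family produced by $AFRATI\_ALG$, and that the algorithm selects the cheapest member of this family. Throughout I use the fact that, for a server moving at unit speed along $\mathcal{L}$ starting at $o$, the completion time of a request equals the arc length of the tour from $o$ up to the first visit of its location; hence for any fixed tour $T$, the completion time of $r_i$ under $T$ equals the arc length of $T$ from $o$ to the first visit of $p_i$. In particular, once optimality of $OFF\_TOUR$ is established, the stated identity $OPT(r_i)=|OFF\_TOUR_i|$ is immediate from this arc-length characterization.

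First I would normalize an arbitrary optimal schedule by two exchange arguments. (i) \emph{No waiting:} since all requests are available at time $0$ in the offline problem, any idle interval only postpones the first visit of every still-unserved location; deleting idle time (reparametrizing the path to unit speed) cannot increase any completion time, so we may assume the server always moves at full speed. (ii) \emph{Turning only at records:} to reach a location $p$ for the first time the server must traverse the whole segment between its current position and $p$, so the set of served locations is, at every instant, a contiguous interval $[x,y]\ni o$ whose endpoints move monotonically outward. If the path reversed direction at a point that is \emph{not} a new leftmost or rightmost record, the corresponding excursion into already-served territory serves no new request; excising that loop shortens the tour and therefore strictly decreases the completion time of every request served afterwards while leaving the earlier ones unchanged. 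Iterating, we obtain an optimal schedule whose turning points are exactly the successive left/right records, i.e.\ a sequence $o=(x_0,y_0),x_1,y_1,x_2,y_2,\dots$ of the form traversed by $AFRATI\_ALG$.

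It remains to argue that $AFRATI\_ALG$ computes the best such zigzag. I would set up the natural dynamic program over states $(x_i,y_j,\text{endpoint})$, where the server has already covered the interval $[x_i,y_j]$ and currently sits at $x_i$ or $y_j$. Because the served interval is determined by $(x_i,y_j)$, so is the multiset of still-unserved requests (those lying on the two rays beyond $x_i$ and $y_j$), and each future unit of travel adds exactly that many units to the accumulated latency. This yields clean optimal substructure: the optimal cost-to-go from a state depends only on the state, and extending the covered interval to the next record on either side gives the Bellman recursion that $AFRATI\_ALG$ solves. Since the optimal schedule lies in the zigzag family and the DP is exact over that family, $AFRATI\_ALG$ returns an optimal tour, establishing competitive ratio $1$.

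The main obstacle I anticipate is making the ``turning only at records'' exchange fully rigorous: one must verify that excising a non-record excursion never \emph{increases} any completion time, which requires tracking that no request is first served during such an excursion (true, since the excursion stays within the already-served interval) and that the resulting arc-length decrease propagates correctly to all later first-visit times. A secondary technical point is confirming that the state $(x_i,y_j,\text{endpoint})$ truly captures all information relevant to future latency---namely that the weight applied to future travel (the count of unserved requests) is a function of the state alone---so that the recursion is valid and $AFRATI\_ALG$'s search is exhaustive over the optimal family.
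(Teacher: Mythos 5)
There is no proof in the paper to compare yours against: the paper imports this theorem by citation from \cite{afrati1986complexity} and uses $AFRATI\_ALG$ purely as a black box for constructing $PRED\_TOUR$. Judged on its own merits, your reconstruction is essentially the standard argument behind the cited result, and it is sound in outline: (i) the arc-length characterization of completion times is definitional for a unit-speed server; (ii) no-waiting and the fact that the served set is always an interval $[x,y]\ni o$ with monotonically advancing endpoints are correct; (iii) the excision step is valid because between two consecutive moments at which a new leftmost/rightmost record is attained no request is served for the first time, so replacing the intervening wandering by a direct monotone move leaves earlier completion times unchanged and can only decrease later ones; and (iv) the interval DP is exact because the latency-to-go equals the integral over remaining time of the number of unserved requests, which is a function of the state $(x_i,y_j,\text{endpoint})$ and the future path alone. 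Two points deserve tightening. First, rather than iterating the excision on arbitrary trajectories (where termination and even piecewise-linearity are not given), argue up front that any schedule is dominated by one that moves at full speed directly between consecutive first visits; this restricts attention to the finitely many service orders, over which the record-turning normalization and the DP become finite combinatorial arguments. Second, your side remark that the identity $OPT(r_i)=|OFF\_TOUR_i|$ is ``immediate'' holds only under the reading that $OPT(r_i)$ denotes the completion time of $r_i$ in the optimal tour (a definitional arc-length fact); sum-optimality does not imply that each request's completion time is individually minimized, and no such per-request claim is needed for the theorem itself.
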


\begin{algorithm}[t!]
\footnotesize
\begin{algorithmic}[1]
\STATE given the predicted locations of the requests in $\mathcal{R}$, compute $PRED\_TOUR$ of the predicted locations using $AFRATI\_ALG$.
\STATE suppose $PRED\_TOUR$ visits $\mathcal{L}_{left}$ first (visiting $\mathcal{L}_{right}$ first is analogous).
\STATE $\{o,x_1,y_1,x_2,y_2, \ldots, x_m,y_{m'}\}\leftarrow$ $PRED\_TOUR$ such that $x_i,y_i$ are the points at which $PRED\_TOUR$  changes direction, with $x_m=a$ and $y_{m'}=b$. 

\STATE Let $|x_i| (|y_i|$) denote the distance of point $x_i (y_i)$ from $o$. We have that $|x_{i+1}|>|x_{i}|$ and $|y_{i+1}|>|y_{i}|$.

\WHILE{not all requests in $\mathcal{R}$ are served}
\STATE visit $PRED\_TOUR$ in roundtrips $RT_j,j\geq 1$, with each round trip starting from and ending at $o$.
\ENDWHILE

\caption{Algorithm in line  $\mathcal{L}$ with perfect predictions}
 \label{algorithm:line}
 \end{algorithmic}
\end{algorithm}


\subsection{Line Algorithm with Perfect Predictions}
Our algorithm $ON$ is outlined in Algorithm \ref{algorithm:line}. 
Let $\{p_1^{loc}, p_2^{loc},\ldots, p_n^{loc}\}$ be the predicted locations of the $n$ requests in $\mathcal{R}$. 
We run $AFRATI\_ALG$ on the predicted locations and compute a tour (Fig.~\ref{fig:tour}) which we call $PRED\_TOUR:=\{o,x_1,y_1,x_2,y_2,\ldots, x_m,y_{m'}\}$, where $x_i,y_i, i\geq 1$ are the points on $\mathcal{L}_{left}$ and $\mathcal{L}_{right}$, respectively, on which the tour changes direction. (Note that we assume first direction is $left$ from $o$; the case of first direction $right$ is analogous.) 
$PRED\_TOUR$ is now visited in round-trips like Algorithm \ref{algorithm:half-line}. 
At some round trip $RT_k$, there is no request remains to be served and at that time Algorithm \ref{algorithm:line} stops. 

We now analyze Algorithm \ref{algorithm:line} for its correctness and competitive ratio guarantees. Correctness means that eventually all requests in $\mathcal{R}$ are served. 
We need some observations.

\begin{observation}
Consider two consecutive segments $[x_i,y_i]$ and $[y_i,x_{i+1}]$ with $i\geq 1$ (or $[y_j,x_{j+1}]$ and $[x_{j+1},y_{j+1}]$ with $j\geq 0$ and $y_j=o$ for the first segment $[o,x_1]$) of $PRED\_TOUR$. The length of the segment $|y_ix_{i+1}|>|x_iy_i|$ (or $|x_{j+1}y_{j+1}|>|y_jx_{j+1}|$. 
\end{observation}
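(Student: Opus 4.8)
The plan is to reduce each claimed inequality between consecutive segment lengths to the monotonicity of turning-point distances already recorded in Algorithm \ref{algorithm:line}, namely $|x_{i+1}| > |x_i|$ and $|y_{i+1}| > |y_i|$. First I would rewrite every segment length in terms of distances measured from $o$. Since each $x_i$ lies on $\mathcal{L}_{left}$ at distance $|x_i|$ from $o$ and each $y_i$ lies on $\mathcal{L}_{right}$ at distance $|y_i|$, any segment of $PRED\_TOUR$ running between a left turning point and a right turning point crosses $o$ and therefore has length equal to the \emph{sum} of the two endpoints' distances from $o$. Concretely, $|x_i y_i| = |x_i| + |y_i|$ and $|y_i x_{i+1}| = |y_i| + |x_{i+1}|$.

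For the first case (segments $[x_i, y_i]$ and $[y_i, x_{i+1}]$ with $i \geq 1$) I would simply subtract the two lengths:
\[
|y_i x_{i+1}| - |x_i y_i| = \bigl(|y_i| + |x_{i+1}|\bigr) - \bigl(|x_i| + |y_i|\bigr) = |x_{i+1}| - |x_i| > 0,
\]
where the final inequality is the recorded monotonicity. For the second case (the pair $[y_j, x_{j+1}]$ and $[x_{j+1}, y_{j+1}]$ with $j \geq 0$, where setting $y_0 = o$ subsumes the first segment $[o, x_1]$) the analogous subtraction gives $|x_{j+1} y_{j+1}| - |y_j x_{j+1}| = |y_{j+1}| - |y_j| > 0$; the boundary instance $j = 0$ reduces to $|y_1| > |y_0| = 0$, which holds because $y_1$ is a genuine turning point at positive distance from $o$.

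There is no real obstacle here beyond bookkeeping. Once the segment lengths are expressed as sums of distances from $o$, each inequality collapses to one of the two monotonicity facts, with the common endpoint ($y_i$ in the first case, $x_{j+1}$ in the second) cancelling. The only thing needing care is getting the origin-crossing length accounting right --- a segment spanning both sides of $o$ has length equal to the sum, not the difference, of the endpoint distances --- and matching each index convention ($i$ for right-then-left pairs, $j$ for left-then-right pairs) to the correct monotonicity inequality. I would therefore make the origin-crossing length bookkeeping explicit in one sentence and then let the two one-line cancellations close the argument.
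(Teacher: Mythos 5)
Your proposal is correct, and it is essentially the argument the paper has in mind: the paper states this as an Observation with no written proof, relying implicitly on the monotonicity $|x_{i+1}|>|x_i|$, $|y_{i+1}|>|y_i|$ recorded in Algorithm~\ref{algorithm:line} and on the fact that every segment of $PRED\_TOUR$ crosses the origin (its Observation~3), so each length is the sum of the endpoint distances and the inequality follows by the cancellation you describe. Your only addition is the explicit handling of the boundary case $y_0=o$ (i.e., $|y_1|>0$), which is a sensible piece of bookkeeping the paper leaves tacit.
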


\begin{observation}
Consider two consecutive segments $[x_i,y_i]$ and $[y_i,x_{i+1}]$ with $i\geq 1$ (or $[y_j,x_{j+1}]$ and $[x_{j+1},y_{j+1}]$ with $j\geq 0$ and $y_j=o$ for the first segment $[o,x_1]$) of $PRED\_TOUR$. Segment $[y_i,x_{i+1}]$ subsumes segment $[x_i,y_i]$, meaning that $[x_i,y_i]$ is completely contained in $[y_i,x_{i+1}]$ (same for $[y_j,x_{j+1}]$ and $[x_{j+1},y_{j+1}]$).
\end{observation}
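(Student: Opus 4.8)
The plan is to reduce the subsumption claim to the monotonicity of the turning-point distances $|x_{i+1}|>|x_i|$ and $|y_{i+1}|>|y_i|$ recorded in Line~4 of Algorithm~\ref{algorithm:line}. First I would fix coordinates on $\mathcal{L}$ with $o$ at the origin, the $left$ direction negative and the $right$ direction positive, so that the turning point $x_i\in\mathcal{L}_{left}$ sits at coordinate $-|x_i|$ and $y_i\in\mathcal{L}_{right}$ sits at coordinate $|y_i|$ (with $x_0=y_0=o$ at $0$). Under this identification every directed segment of $PRED\_TOUR$ becomes a closed real interval, and ``subsumes'' is exactly set-theoretic containment of these intervals; the only care needed is mapping each \emph{directed} traversal to its correct \emph{undirected} interval.

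For the first pattern, $[x_i,y_i]$ (a $right$ traversal) followed by $[y_i,x_{i+1}]$ (a $left$ traversal) with $i\ge 1$, the two segments are the intervals $[-|x_i|,\,|y_i|]$ and $[-|x_{i+1}|,\,|y_i|]$. They share the right endpoint $|y_i|$, the point where $PRED\_TOUR$ reverses direction, so the containment $[x_i,y_i]\subseteq[y_i,x_{i+1}]$ collapses to comparing left endpoints, i.e.\ to $-|x_{i+1}|\le -|x_i|$, which is just $|x_{i+1}|>|x_i|$. For the second pattern, $[y_j,x_{j+1}]$ (a $left$ traversal) followed by $[x_{j+1},y_{j+1}]$ (a $right$ traversal) with $j\ge 0$, the intervals are $[-|x_{j+1}|,\,|y_j|]$ and $[-|x_{j+1}|,\,|y_{j+1}|]$; now they share the left endpoint $-|x_{j+1}|$, and containment reduces to $|y_j|\le|y_{j+1}|$, which is $|y_{j+1}|>|y_j|$. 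The first segment $[o,x_1]$ is the case $j=0$ of this second pattern, with $y_0=o$ and $|y_0|=0$, so $[o,x_1]=[-|x_1|,0]\subseteq[-|x_1|,\,|y_1|]=[x_1,y_1]$ follows from $|y_1|>0$; no separate argument is needed.

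Both required inequalities are exactly the endpoint monotonicity of Line~4 (equivalently, they follow from Observation~1, since in each pattern the two consecutive segments share an endpoint and the later one is longer, and a longer interval sharing an endpoint with a shorter one, extending in the same direction from that endpoint, must contain it). I therefore do not anticipate any genuine obstacle: the whole content is the coordinate bookkeeping that identifies which endpoint two consecutive segments share---namely $y_i$ in the first pattern and $x_{j+1}$ in the second---after which containment is immediate from the monotonicity of $\{|x_i|\}$ and $\{|y_i|\}$.
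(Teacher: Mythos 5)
Your proof is correct: the paper states this observation without any proof, treating it as immediate from the structure of $PRED\_TOUR$, and your argument formalizes exactly that implicit justification---reducing containment to the turning-point monotonicity $|x_{i+1}|>|x_i|$, $|y_{i+1}|>|y_i|$ asserted in Line~4 of Algorithm~\ref{algorithm:line}, after the coordinate bookkeeping that identifies the shared endpoint of each pair of consecutive segments. Nothing is missing, and the handling of the first segment $[o,x_1]$ as the $j=0$ case is consistent with the paper's convention $y_0=o$.
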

\begin{observation}
The origin $o$ is contained in  each segment $[x_i,y_i]$ and $[y_i,x_{i+1}]$ of $PRED\_TOUR$. 
\end{observation}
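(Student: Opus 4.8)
The plan is to exploit the sign structure that $AFRATI\_ALG$ imposes on the turning points of $PRED\_TOUR$. By construction the tour alternates direction from $o$, so every $x_i$ lies on $\mathcal{L}_{left}$ and every $y_i$ lies on $\mathcal{L}_{right}$; since $o$ is precisely the boundary between these two halves, any segment joining a left turning point to a right turning point is forced to straddle $o$. The whole observation should reduce to this single sign argument.

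Concretely, I would fix coordinates with $o$ at $0$, so that points of $\mathcal{L}_{left}$ have nonpositive coordinate and points of $\mathcal{L}_{right}$ have nonnegative coordinate. For the segment $[x_i,y_i]$ with $i\geq 1$, the left endpoint satisfies $x_i\leq 0$ and the right endpoint satisfies $y_i\geq 0$, so the coordinate $0$---namely $o$---lies in the closed interval $[x_i,y_i]$. The segment $[y_i,x_{i+1}]$ is handled identically: now $x_{i+1}\leq 0\leq y_i$, so $o$ again lies between the two endpoints. For the initial segment $[o,x_1]=[y_0,x_1]$ the claim is immediate, since $o$ is itself an endpoint.

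I do not expect any genuine obstacle here, as the observation follows directly from the definitions of $\mathcal{L}_{left}$ and $\mathcal{L}_{right}$ together with the alternating structure of $AFRATI\_ALG$. The only point I would state carefully is that the intervals are closed, so that even the degenerate situation $|x_i|=0$ or $|y_i|=0$ would still place $o$ inside the segment; and in fact this degeneracy cannot arise for $i\geq 1$ because the turning distances are strictly increasing ($|x_{i+1}|>|x_i|$ and $|y_{i+1}|>|y_i|$). I would therefore keep the argument to the two-line sign observation above rather than invoking any further machinery.
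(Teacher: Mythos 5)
Your proof is correct and matches the paper's (implicit) reasoning: the paper states this as an observation without proof, relying on exactly the fact you make explicit, namely that each $x_i$ lies on $\mathcal{L}_{left}$, each $y_i$ lies on $\mathcal{L}_{right}$, and $o$ is the boundary point between the two halves, so any closed segment joining a left turning point to a right turning point must contain $o$. Your additional care about closed intervals and degenerate endpoints is a harmless strengthening of the same argument.
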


\begin{theorem}
    Algorithm \ref{algorithm:line} correctly serves all the predicted requests in $\mathcal{R}$ on a line $\mathcal{L}$ solving online TRP. 
\end{theorem}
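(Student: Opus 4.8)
The plan is to prove correctness, i.e.\ eventual service of every request, by combining three facts: under perfect predictions each request sits on $PRED\_TOUR$, the tour has finite length and physically spans the whole relevant portion of $\mathcal{L}$, and the round-trip prefix lengths grow without bound. Correctness (unlike the competitive ratio) only requires that each request location is reached at some finite time after its arrival, so I can argue about eventual coverage rather than precise first-visit timing.

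First I would invoke the perfect-prediction hypothesis: for every request $r_i$ we have $a_i^{loc}=p_i^{loc}$, and by the construction of $AFRATI\_ALG$ the predicted location $p_i^{loc}$ is a point on $PRED\_TOUR=\{o,x_1,y_1,\ldots,x_m,y_{m'}\}$. Since $AFRATI\_ALG$ makes the extreme turning points coincide with the endpoints $x_m=a$ and $y_{m'}=b$, the physical span swept out by one full traversal of $PRED\_TOUR$ is the entire segment $[a,b]$, which contains the location of every request. I would also note that $PRED\_TOUR$ has a finite total length $L$, being a finite concatenation of the segments $[x_i,y_i]$ and $[y_i,x_{i+1}]$; here Observations~2 and~3 are convenient, since they show these segments are nested around $o$ and together cover $[a,b]$, so traversing the full unfolded tour of length $L$ passes through every point of $[a,b]$.

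Next I would analyze the round-trip schedule. Algorithm~\ref{algorithm:line} traverses the unfolded $PRED\_TOUR$ exactly as the half-line scheme of Algorithm~\ref{algorithm:half-line}, so the $j$-th round trip covers an initial portion of $PRED\_TOUR$ of length $\tfrac{2+2\alpha}{2}$ for $j=1$ and $\tfrac{(2+2\alpha)^{j-1}(1+2\alpha)}{2}$ for $j\ge 2$. These prefix lengths grow geometrically and are therefore unbounded, so there is a finite index $J$ with $\tfrac{(2+2\alpha)^{J-1}(1+2\alpha)}{2}\ge L$; from $RT_J$ onward each round trip traverses all of $PRED\_TOUR$ and hence physically visits every point of $[a,b]$. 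Finally I would use finiteness of the input: because $|\mathcal{R}|=n$ and every $t_i$ are finite, all requests have arrived by $T=\max_i t_i$, and so long as any request is outstanding the loop issues another round trip. Thus the server performs some round trip of index at least $J$ beginning after $T$, during which it reaches the location of each $r_i$ and serves it; hence all requests are served in finite time.

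The main obstacle I anticipate is bookkeeping the relationship between lengths measured along the \emph{unfolded} $PRED\_TOUR$ and physical coverage of $\mathcal{L}$: a single physical point of $\mathcal{L}$ can correspond to several positions along the unfolded tour, so I must be careful to conclude that reaching the full unfolded length $L$ guarantees physically passing every request location. This is exactly what Observations~2 and~3 deliver, together with $x_m=a$ and $y_{m'}=b$, so once that reduction is stated the remainder is routine.
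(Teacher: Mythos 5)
Your proof is correct and follows essentially the same route as the paper's: it uses the nesting/growth structure of the $PRED\_TOUR$ segments (the paper's Observations), the fact that the turning points extend to the endpoints $a$ and $b$, and the while-loop guarantee that round trips continue until every request is served. Your version is simply a more rigorous spelling-out of the paper's terse argument, adding the explicit geometric-growth bound and the finite-arrival-time step, plus the careful unfolded-tour-versus-physical-line distinction; none of this changes the underlying approach.
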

\begin{proof}
From Observations 1, 2, and 3, we have that in each subsequent segment, Algorithm \ref{algorithm:line} traverses more length of $\mathcal{L}$ than the length covered in the previous segment. Furthermore, Algorithm \ref{algorithm:line} does not terminate before both endpoints $a,b$ of $\mathcal{L}$ are visited. If there are still requests in $\mathcal{R}$ remain to be served, Algorithm \ref{algorithm:line} round trips between $a$ and $b$ until all requests in $\mathcal{R}$ are served. 
\end{proof}

We now analyze Algorithm \ref{algorithm:line} for its competitive ratio. 


\begin{observation}
    Suppose each request $r_i$ predicted to arrive at $p^i_{loc}$ is released at time $0$ with no prediction error, traversing $PRED\_TOUR$ only once optimally serves TRP (i.e., competitive ratio 1). 
\end{observation}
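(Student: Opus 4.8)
The plan is to observe that under these two hypotheses the online instance degenerates to the offline TRP instance that $AFRATI\_ALG$ solves optimally, so the statement is essentially a restatement of the Afrati optimality theorem. First I would use perfect predictions: $\delta=0$ means $a_i^{loc}=p_i^{loc}$ for every $r_i$, so the tour $PRED\_TOUR$ that Algorithm \ref{algorithm:line} builds from the predicted locations is identical to the tour $OFF\_TOUR$ that $AFRATI\_ALG$ would build from the true locations of $\mathcal{R}$. Second, since all requests are released at time $0$, we have $t_i=0$ for all $i$; hence the arrival-time lower bound $OPT(r_i)\ge t_i$ is vacuous and the instance is exactly offline TRP, in which all locations are known in advance.

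Next I would argue that a single traversal of $PRED\_TOUR$ serves every request. Because all requests are already present at $t=0$ and, by construction of $AFRATI\_ALG$, $PRED\_TOUR$ visits each predicted (hence each actual) location at least once, no request can arrive after the server has passed its location; consequently one pass suffices and the round-trip schedule of Algorithm \ref{algorithm:half-line} never advances beyond its first sweep. Moving at unit speed, the server reaches $p_i^{loc}$ for the first time after traveling a distance equal to $|OFF\_TOUR_i|$, so the completion time of $r_i$ is exactly $|OFF\_TOUR_i|$ and therefore $ON(\mathcal{R})=\sum_{i=1}^n |OFF\_TOUR_i|$.

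Finally I would invoke the Afrati theorem stated in the excerpt, which gives $OPT(r_i)=|OFF\_TOUR_i|$ for every $r_i$ (and hence $OPT(\mathcal{R})=\sum_i|OFF\_TOUR_i|$). Combining the two identities yields $ON(r_i)=OPT(r_i)$ for each request and $ON(\mathcal{R})=OPT(\mathcal{R})$, i.e., competitive ratio $1$. The only delicate point — and the closest thing to an obstacle — is the ``one traversal suffices'' claim: it must be checked that simultaneous release at $t=0$ removes any interaction between the doubling round-trip lengths and the arrival times, so that the algorithm truly collapses to a plain single sweep of $OFF\_TOUR$. Once that collapse is justified, the result follows immediately from Afrati's optimality.
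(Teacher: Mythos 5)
Your core argument is correct and matches the paper's intent: the paper states this observation without proof, as an immediate corollary of its Theorem on $AFRATI\_ALG$ (with perfect predictions $PRED\_TOUR$ coincides with $OFF\_TOUR$ on the actual locations, and with all arrivals at $t=0$ the arrival-time lower bound is vacuous, so the instance is exactly offline TRP and Afrati's optimality gives ratio $1$). That two-step reduction is all that is needed.

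However, the step you flag as the ``delicate point'' rests on a misreading, and the claim you make there is false: the round-trip schedule of Algorithm \ref{algorithm:half-line} (and hence of Algorithm \ref{algorithm:line}) does \emph{not} collapse to a single sweep when all requests arrive at $t=0$. The round-trip lengths $2+2\alpha,\ (2+2\alpha)^{j-1}(1+2\alpha),\ldots$ are fixed in advance and independent of arrival times, so a request at time $0$ located far from $o$ is still served only in a later round trip; the algorithm run on this instance achieves ratio up to $2+\sqrt{3}$, not $1$. The observation is not a statement about the algorithm's execution at all --- it concerns the separate policy of traversing $PRED\_TOUR$ exactly once, which is precisely $AFRATI\_ALG$'s offline tour. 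Since your proof of the observation itself never actually needs the collapse claim (the single-traversal policy serves every request because all are present at $t=0$ and the tour visits every location), the error is inessential, but you should delete that claim rather than try to verify it: as stated it cannot be verified because it is untrue.
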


\begin{theorem}
\label{theorem:line}
Algorithm \ref{algorithm:line} is $2+\sqrt{3}$-competitive for online TRP on $\mathcal{L}$ in the prediction model with perfect predictions.
\end{theorem}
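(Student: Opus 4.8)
The plan is to reduce Algorithm~\ref{algorithm:line} on the line to the already-analyzed half-line Algorithm~\ref{algorithm:half-line} by reinterpreting $PRED\_TOUR$ as a \emph{virtual half-line}, exactly as suggested by the unfolding in Fig.~\ref{fig:tour2}. For each request $r_i$ at location $p_i$, define its virtual coordinate $L_i := |PRED\_TOUR_i|$, the length along $PRED\_TOUR$ at which the tour first reaches $p_i$. First I would pin down the denominator of the competitive ratio: under perfect predictions the predicted and actual locations coincide, so $PRED\_TOUR$ is exactly the tour $AFRATI\_ALG$ builds on the true request set, and by its optimality (the cited theorem, equivalently the observation that one traversal of $PRED\_TOUR$ is optimal when all requests are released at time $0$) we have $|OPT\_TOUR_i| = L_i$. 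Hence $OPT(r_i) = \max\{L_i, t_i\}$ for every request.

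Next I would verify that, read off in the virtual coordinate, Algorithm~\ref{algorithm:line} \emph{is} Algorithm~\ref{algorithm:half-line}. Because the round trips $RT_j$ use the same lengths as in the half-line algorithm, the outward leg of $RT_j$ follows $PRED\_TOUR$ for virtual length $\frac{|RT_j|}{2}$ and the return leg retraces it back to $o$. The retrace only re-covers tour-distances already seen, so it first-reaches no new location; consequently a request with $L_i \in (\frac{|RT_{j-1}|}{2}, \frac{|RT_j|}{2}]$ is served during the outward leg of $RT_j$ at precisely the instant a half-line server would reach coordinate $L_i$. Thus a request ``arriving'' at virtual coordinate $L_i$ at time $t_i$ has the same completion time $ON(r_i)$ under both algorithms. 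Observations~1--3 guarantee the nested, strictly growing segment structure that makes the coordinate $L_i$ well-defined and ensure that once the whole tour is covered the server simply re-traverses it, which is exactly the regime the half-line analysis already covers for large distances.

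With this identification, Theorem~\ref{theorem:half-line} applies verbatim with $l_i$ replaced by $L_i$, giving $ON(r_i) \leq \max\{2+2\alpha,\ \frac{5+6\alpha}{1+2\alpha}\}\cdot \max\{L_i, t_i\}$. Choosing $\alpha = \frac{\sqrt{3}}{2}$ as in the corollary collapses the maximum to $2+\sqrt{3}$, so $ON(r_i) \leq (2+\sqrt{3})\,OPT(r_i)$ for every $r_i$; since minimizing each individual ratio minimizes the aggregate ratio, the claimed $2+\sqrt{3}$ competitiveness follows.

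The step I expect to be the crux is making the unfolding airtight: I must justify that $L_i$ (the first-reach tour-length) is the right virtual coordinate so that $OPT(r_i)=\max\{L_i,t_i\}$, and that the retrace contributes no first-reaches, so that round-trip timing on the line coincides exactly with the half-line timing. Everything downstream is a direct invocation of Theorem~\ref{theorem:half-line}, which never uses the finite length of the half-line and therefore transfers unchanged to the longer virtual coordinate.
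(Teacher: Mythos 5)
Your proposal is correct and follows essentially the same route as the paper's own proof: unfold $PRED\_TOUR$ into a virtual half-line (as in Fig.~\ref{fig:tour2}), use the optimality of $AFRATI\_ALG$ under perfect predictions to identify $OPT(r_i)$ with $\max\{|PRED\_TOUR_i|, t_i\}$, and transfer Theorem~\ref{theorem:half-line} with $\alpha=\frac{\sqrt{3}}{2}$. Your write-up is in fact more explicit than the paper's (which argues the correspondence only informally via the ``bold intervals'' picture), particularly in defining the virtual coordinate $L_i$ and noting that return legs and repeated physical visits can only serve requests earlier, so the half-line bound remains a valid upper bound.
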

\begin{proof}
Consider $PRED\_TOUR$ as a half-line $\mathcal{L}$ as shown in Fig.~\ref{fig:tour2}.  Like half-line $\mathcal{L}$, following $PRED\_TOUR$ is optimal even for offline TRP since the predictions are perfect and requests arrive at those locations.  
In the offine TRP, there is no request to serve in $PRED\_TOUR$ between its two consecutive bold intervals (Fig.~\ref{fig:tour2}. In online TRP, there may be requests arriving at time steps such that they need to be served. However, those requests are served when the server traversing currently right (or left) direction changes the direction to traverse left (or right) before it changes again to right (or left).  
Therefore, the competitive ratio of Algorithm \ref{algorithm:half-line} on the half-line $\mathcal{L}$ extends to $PRED\_TOUR$ giving the same  $2+\sqrt{3}$ competitive ratio for online TRP on a line.
\end{proof}

\section{$\min\{2+\sqrt{3}+4\delta,4\}$-Competitive Algorithm on a Line with Imperfect Predictions}
\label{section:upper-line-d}

We assume that $\Delta$ is known. $|\mathcal{L}|$ is known since $\mathcal{L}$ is known. If $\Delta\geq 0.067 |\mathcal{L}|$ (i.e., $\delta\geq 0.067$), we run the algorithm of \cite{BienkowskiKL21-ICALP} in the original model. 
However, if $\Delta< 0.067 |\mathcal{L}|$ (i.e., $\delta<0.067)$,  we modify Algorithm \ref{algorithm:line} as follows.   
Given the predicted locations $\{p_1^{loc}, p_2^{loc},\ldots, p_n^{loc}\}$, we get new predicted locations $\{\hat{p}_1^{loc}, \hat{p}_2^{loc},\ldots, \hat{p}_n^{loc}\}$ setting $\hat{p}_i^{loc}=p_i^{loc}-\Delta$. We then use  $AFRATI\_ALG$ and compute $PRED\_TOUR'$ of locations $\{\hat{p}_1^{loc}, \hat{p}_2^{loc},\ldots, \hat{p}_n^{loc}\}$. 
We have the following observation for $PRED\_TOUR'$.

\begin{observation}
    There does not exist any other solution which gives better  sum of completion times than $PRED\_TOUR'$ on a line with predictor error $\Delta$. 
\end{observation}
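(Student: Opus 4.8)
The plan is to obtain the statement as a consequence of the optimality of $AFRATI\_ALG$ (the cited theorem) applied to the displaced point set, together with a short worst-case argument that this displaced set is the correct object to optimize under error $\Delta$. First I would recall that, by the cited theorem, $AFRATI\_ALG$ returns a sum-of-completion-times optimal TRP tour for any finite set of points on $\mathcal{L}$ when every request is available at time $0$: no tour issued from $o$ attains a smaller total completion time. Since $PRED\_TOUR'$ is by construction exactly the $AFRATI\_ALG$ tour on the displaced locations $\hat{p}_i^{loc}=p_i^{loc}-\Delta$, it is already cost-optimal for that point set, which settles the comparison against every tour that visits precisely the $\hat{p}_i^{loc}$.

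The remaining step is to justify that optimizing for $\{\hat{p}_i^{loc}\}$ cannot be beaten by any tour aimed at the true (unknown) locations. I would use that each actual location obeys $a_i^{loc}\in[\hat{p}_i^{loc},\,\hat{p}_i^{loc}+2\Delta]$, so the displacement pins every request to the left endpoint of its uncertainty interval and thereby fixes a single canonical benchmark instance. Because the displacement is uniform it preserves the left-to-right order of the requests, hence the turning points $x_i,y_i$ and the subsumption structure recorded in Observations~1--3 carry over unchanged, and the $AFRATI\_ALG$ visiting order is shared by the benchmark and the true instances. I would then argue, by a monotonicity/exchange argument along this fixed order, that no alternative solution can yield a strictly smaller sum of completion times than $PRED\_TOUR'$: any competitor either reproduces the $AFRATI\_ALG$ order, in which case it is matched by Afrati optimality, or reorders some pair of adjacent visits, which the exchange argument shows cannot decrease the total.

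I expect the main obstacle to be making this worst-case/monotonicity step precise, because the displacement is not a rigid translation of the problem: it moves right-of-origin points toward $o$ but left-of-origin points away from $o$, so a single symmetry argument does not suffice. I would handle this by treating $\mathcal{L}_{left}$ and $\mathcal{L}_{right}$ separately and running the exchange argument on adjacent requests on each side, showing that displacing a request inside its $2\Delta$-interval while keeping the $AFRATI\_ALG$ order changes the completion-time sum monotonically in the displacement, so the extremal displaced configuration attains the minimum that $PRED\_TOUR'$ realizes. Carrying this exchange argument through while respecting the turning-point structure of Observations~1--3 is the real work; once it is in place, the statement follows immediately from the cited optimality theorem.
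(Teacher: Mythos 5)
The paper states this observation without any proof, so there is no in-paper argument to compare against; judged on its own terms, your proposal has a genuine gap, and it sits exactly where you locate ``the real work.'' Your argument hinges on the claim that pinning each request to the left endpoint $\hat{p}_i^{loc}=p_i^{loc}-\Delta$ of its uncertainty interval is the extremal, cost-minimizing configuration, so that a monotonicity/exchange argument certifies $PRED\_TOUR'$ as unbeatable. Under that signed reading of the shift, the inequality goes the wrong way on $\mathcal{L}_{left}$: a request predicted at signed position $-d$ is displaced to $-d-\Delta$, i.e., \emph{farther} from $o$, while the actual request may arrive at $-d+\Delta$; the optimal solution for that actual instance completes the request at time $d-\Delta$, strictly beating the benchmark's $d+\Delta$. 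So on the left side the displaced configuration attains the \emph{maximum}, not the minimum, and no adjacent-exchange argument can rescue it --- the left/right asymmetry you flag is not a technical wrinkle but a falsifier of the approach. For the observation to hold at all, the shift must be read as moving each predicted location a distance $\Delta$ \emph{toward} $o$ on its own side (clamped at $o$).

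A second, independent gap is your assertion that the displaced and true instances share the $AFRATI\_ALG$ visiting order and turning points because the displacement is uniform. The true locations deviate from the predictions independently, by up to $\Delta$ each, so two requests within $2\Delta$ of one another can swap order, and the optimal tour's turning structure can change discontinuously; nothing ties a competitor solution on the true instance to the benchmark's order, which is precisely what your exchange argument presupposes. Under the toward-the-origin reading, none of this machinery is needed: each displaced point $\hat{p}_i^{loc}$ lies on the segment $[o,a_i^{loc}]$, so any feasible server motion for the actual instance, reinterpreted as serving the displaced points, serves each $\hat{p}_i^{loc}$ no later than it serves $a_i^{loc}$. Hence the optimal cost on the displaced instance is at most the cost of any solution on any consistent actual instance, and by the cited optimality of $AFRATI\_ALG$, $PRED\_TOUR'$ achieves that displaced optimum. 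Your first step (Afrati optimality on the displaced set) is correct; the bridge should be this domination argument, not an exchange along a fixed order.
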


Since we have error $\Delta$, we modify the tour $PRED\_TOUR'$ to obtain $PRED\_TOUR''$. We then run  Algorithm \ref{algorithm:line} on $PRED\_TOUR''$ with error $\Delta$. 
Let $\{o,x_1,y_1,x_2,y_2, \ldots, x_m,y_{m'}\}$ be  $PRED\_TOUR'$ such that $x_i,y_i$ are the points at which it  changes direction.  Except $o$, we shift each point $x_i,y_i$ in $PRED\_TOUR'$ adding $2\Delta$. Doing so, there might be less turning points in $PRED\_TOUR''$ than $PRED\_TOUR'$ but it will not affect our our algorithm.  
We then visit  $PRED\_TOUR''$ like Algorithm \ref{algorithm:line} in round-trips with each round-trip  length as follows:
the first round-trip $RT_1$ is of length $2+2\alpha+4\Delta$ and for $j\geq 2$, $|RT_j|=(2+2\alpha)^{j-1}(1+2\alpha)+ 4\Delta$. Fig.~\ref{fig:deltaroundtrip} provides an illustration of the adjusted round-trips given $\Delta$. The correctness is immediate as in Algorithm \ref{algorithm:line}. We prove the following theorem for the competitive ratio guarantee.

\begin{figure}[t!]
    \centering
    \includegraphics[width=\linewidth]{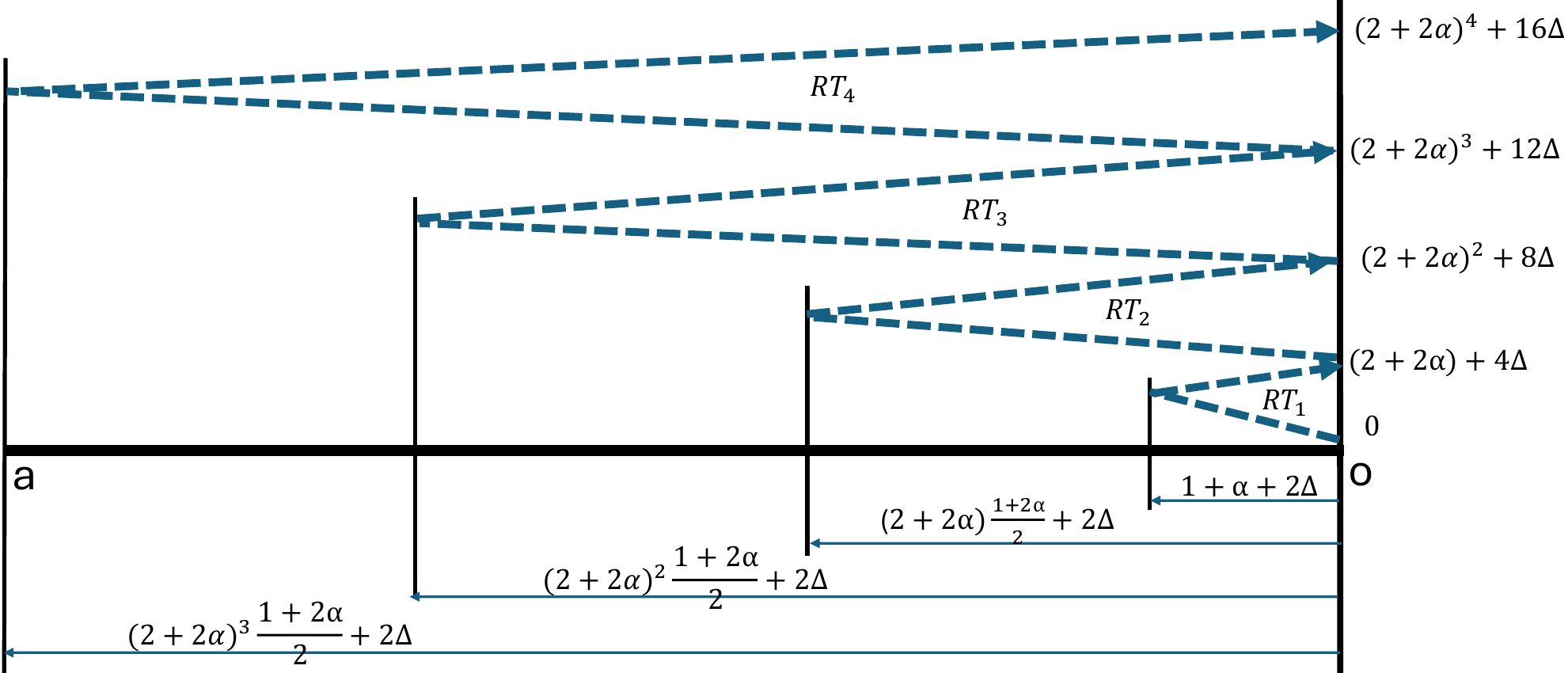}
    \caption{An illustration of how round-trips in Fig.~\ref{fig:halfline} are  adjusted to accommodate error $\Delta$ in predicted locations.}
    \label{fig:deltaroundtrip}
\end{figure}
\begin{theorem}
    Algorithm \ref{algorithm:line} modified with $\Delta$ is $\min\{2+\sqrt{3}+4\delta,4\}$-competitive for online TRP on a line $\mathcal{L}$, where $\delta=\frac{\Delta}{|L|}$.
\end{theorem}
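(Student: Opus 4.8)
The target bound is a minimum of two quantities that mirror the two-case definition of the algorithm, so the plan is to treat the two regimes separately and to check that they agree at the crossover. The crossover solves $2+\sqrt{3}+4\delta=4$, i.e. $4\delta=2-\sqrt{3}$, giving $\delta_0=(2-\sqrt{3})/4\approx 0.067$; hence $2+\sqrt{3}+4\delta\ge 4$ exactly when $\delta\ge\delta_0$. In the regime $\delta\ge 0.067$ the algorithm runs the $4$-competitive procedure of \cite{BienkowskiKL21-ICALP}, which is valid in the original model and therefore a fortiori in the stronger prediction model; since here $\min\{2+\sqrt{3}+4\delta,4\}=4$, this branch is immediate. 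It remains to analyze the modified Algorithm \ref{algorithm:line} in the regime $\delta<0.067$, where $\min\{2+\sqrt{3}+4\delta,4\}=2+\sqrt{3}+4\delta$.

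For $\delta<0.067$ the first step is to reduce to the half-line setting exactly as in Theorem \ref{theorem:line}: unfold $PRED\_TOUR''$ into a half-line and observe that the modified round-trip schedule is precisely Algorithm \ref{algorithm:half-line} run on this half-line, with every round-trip lengthened by the additive term $4\Delta$. The second step is correctness and the validity of the lower bound on $OPT$. Because each true location lies within $\Delta$ of its prediction, shifting the predicted points by $-\Delta$ (forming $PRED\_TOUR'$) and then expanding every turning point outward by $2\Delta$ (forming $PRED\_TOUR''$) guarantees that the segment swept on the round-trip that would have reached the predicted location now also contains the true location; together with Observations~1--3 and the optimality observation for $PRED\_TOUR'$ under error $\Delta$, this shows every request is eventually served and that $OPT(r_i)\ge\max\{l_i',t_i\}$ still holds, where $l_i'$ denotes the true tour-distance to $r_i$.

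The third and main step is the competitive-ratio bookkeeping. I would re-run the Case~1 ($l_i'\ge t_i$) and Case~2 ($l_i'<t_i$) analysis of Theorem \ref{theorem:half-line} verbatim, with two changes: the partial tour traversed before the serving round-trip $j$ is now $\sum_{k<j}|RT_k|=(2+2\alpha)^{j-1}+4\Delta(j-1)$ rather than $(2+2\alpha)^{j-1}$, and the served distance may differ from the predicted one by at most $\Delta$. Fixing $\alpha=\sqrt{3}/2$ so that $2+2\alpha=\tfrac{5+6\alpha}{1+2\alpha}=2+\sqrt{3}$, the base terms still collapse at each round-trip boundary to $ON(r_i)=(2+\sqrt{3})\,l_i'+(\text{error terms})$, reducing the goal to showing that the error terms contribute at most $4\delta\cdot OPT(r_i)$, which would give $\tfrac{ON(r_i)}{OPT(r_i)}\le 2+\sqrt{3}+4\delta$.

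The hard part is precisely this final bound. The delicate point is that the extra traversal accumulates as $4\Delta(j-1)$ across the first $j-1$ round-trips, so a naive estimate yields an error term $\tfrac{4\Delta(j-1)}{l_i'}$ that is largest for requests served \emph{early} (small $j$, small $l_i'$), and this does not directly match $4\delta=4\Delta/|\mathcal{L}|$ when $|\mathcal{L}|$ is large. The plan to resolve this is to argue that an early request cannot simultaneously realize the worst base ratio and a large relative error, and that the request actually maximizing $\tfrac{ON(r_i)}{OPT(r_i)}$ is one near the far end of $\mathcal{L}$, for which $OPT(r_i)=\Theta(|\mathcal{L}|)$ while the excess traversal is $O(\Delta)$; dividing the $O(\Delta)$ excess by $OPT(r_i)\ge|\mathcal{L}|$ then produces the additive $4\Delta/|\mathcal{L}|=4\delta$. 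Making this ``the worst request is the farthest'' claim rigorous---controlling the (logarithmically many) round-trips needed to cover a line of length $|\mathcal{L}|$ and confirming that the leading constant on $\delta$ is exactly $4$---is the crux of the argument and the step I expect to demand the most care.
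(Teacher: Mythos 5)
Your regime split at $\delta_0=(2-\sqrt{3})/4\approx 0.067$ (falling back to the $4$-competitive algorithm of \cite{BienkowskiKL21-ICALP} when $\delta\geq\delta_0$) and your reduction of the modified Algorithm~\ref{algorithm:line} to the half-line analysis of Theorem~\ref{theorem:half-line} with every round trip lengthened by $4\Delta$ are exactly what the paper does; your bookkeeping $\sum_{k<j}|RT_k|=(2+2\alpha)^{j-1}+4\Delta(j-1)$ is also correct. The gap is the final step, which you leave unproven and propose to close in a way that would fail. You claim the ratio-maximizing request lies near the far end of $\mathcal{L}$, so that $OPT(r_i)=\Theta(|\mathcal{L}|)$ absorbs the $O(\Delta)$ excess. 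This is backwards. In the round-trip analysis, a request first reached in $RT_k$ has accumulated excess only $(k-1)4\Delta$ (linear in $k$), while $OPT(r_i)\geq\tfrac{(2+2\alpha)^{k-2}(1+2\alpha)}{2}$ (geometric in $k$); hence the additive error term $\tfrac{(k-1)8\Delta}{(2+2\alpha)^{k-2}(1+2\alpha)}$ is \emph{largest} at $k=2$, where a boundary request simultaneously realizes the worst base ratio $2+\sqrt{3}$ and the worst relative error $\tfrac{8\Delta}{1+\sqrt{3}}\approx 2.93\Delta$. The trade-off you hope for (``an early request cannot realize both'') does not exist, and far-away requests have strictly \emph{smaller} error terms, so no ``worst request is the farthest'' lemma can hold.

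The paper's own closing argument is precisely this geometric-versus-linear comparison: for every $k\geq 1$, $\tfrac{(k-1)8\Delta}{(2+2\alpha)^{k-2}(1+2\alpha)}\leq\tfrac{8\Delta}{1+\sqrt{3}}<4\Delta$, with no reference to where the request sits on $\mathcal{L}$; it then rewrites $4\Delta$ as $4\delta$ via $\delta=\Delta/|\mathcal{L}|$. You are right that this last identification is not scale-free---it is valid only if lengths are normalized so that $|\mathcal{L}|=1$, since the algorithm's round-trip lengths $2+2\alpha,(2+2\alpha)(1+2\alpha),\ldots$ are absolute constants, and the paper never states this normalization---so your unease about units points to a genuine soft spot in the paper's write-up. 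But the repair the paper intends is that normalization together with the maximization at $k=2$, not a global argument about the location of the worst request; as your plan stands, its central claim is false and the proof would not go through.
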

\begin{proof}
For $\Delta\geq 0.067 |\mathcal{L}|$ ($\delta\geq 0.067$), 4-competitive ratio is obtained by the algorithm of \cite{BienkowskiKL21-ICALP}. 

For $\Delta< 0.067 |\mathcal{L}|$ ($\delta< 0.067$), our analysis of Algorithm \ref{algorithm:line} with error $\Delta$ is as follows. 
Consider a request $r_i=(p_i,t_i)$. Let $|PRED\_TOUR''_i|$ be the length of $PRED\_TOUR''$ starting from $o$ that visits $p_i$ for the first time. Let $|PRED\_TOUR'_i|$ be the length of $PRED\_TOUR'$ starting from $o$ that visits $p_i$ for the first time. 

We have two cases depending on whether  $|PRED\_TOUR'_i|\geq t_i$ or $|PRED\_TOUR'_i|< t_i$. Let $r_i$ be visited for the first time in $RT_k, k\geq 1$. 

\begin{itemize}
\item {\bf Case 1:} $|PRED\_TOUR'_i|\geq t_i$. According to the analysis analogous to Theorem \ref{theorem:half-line},  
$$ON(r_i)\leq |PRED\_TOUR'_i|+(2+2\alpha)^{k-1}+(k-1)4\Delta,$$ if when $p_i$ is visited for the first time in $RT_k$.
Additionally,  
$|PRED\_TOUR'_i|\geq \frac{(2+2\alpha)^{k-2}(1+2\alpha)}{2}$ when $p_i$ is visited for the first time in $RT_k$. 
Therefore, $$\frac{ON(r_i)}{OPT(r_i)}=\frac{\frac{(2+2\alpha)^{k-2}(1+2\alpha)}{2}+(2+2\alpha)^{k-1}+(k-1)4\Delta}{\frac{(2+2\alpha)^{k-2}(1+2\alpha)}{2}},$$
which reduces to 
$\frac{ON(r_i)}{OPT(r_i)}=\frac{5+6\alpha}{1+2\alpha}+\frac{(k-1)8\Delta}{(2+2\alpha)^{k-2}(1+2\alpha)}.$

\item {\bf Case 2:} $|PRED\_TOUR'_i|< t_i$.
As in Theorem \ref{theorem:half-line},
if $r_i$'s location is in on the segment of $\mathcal{L}$ visited by $PRED\_TOUR'_i$,  $\frac{ON(r_i)}{OPT(r_i)}= 2+2\alpha+4\Delta.$  
If $|PRED\_TOUR'_i|\geq  t_i$, then Case 1 applies as  in Theorem \ref{theorem:half-line} which is $\frac{ON(r_i)}{OPT(r_i)}=\frac{5+6\alpha}{1+2\alpha}+\frac{(k-1)8\Delta}{(2+2\alpha)^{k-2}(1+2\alpha)}.$
\end{itemize}

Combining the competitive ratios for Cases 1 and 2,
$$\frac{ON(r_i)}{OPT(r_i)}=\max\{2+2\alpha+4\Delta,\frac{5+6\alpha}{1+2\alpha}+\frac{(k-1)8\Delta}{(2+2\alpha)^{k-2}(1+2\alpha)}\}.$$

    Setting $\alpha=\frac{\sqrt{3}}{2}$, for any $k\geq 1$, 
$\frac{(k-1)8\Delta}{(2+2\alpha)^{k-2}(1+2\alpha)}\leq \frac{8}{1+\sqrt{3}}\Delta < 4 \Delta.$
Therefore,  since $\delta=\frac{\Delta}{|\mathcal{L}|}$,
$\frac{ON(r_i)}{OPT(r_i)}=2+\sqrt{3}+4\delta.$
We have the claimed bound since if 
the algorithm of \cite{BienkowskiKL21-ICALP} is applied, Algorithm \ref{algorithm:line} is not applied (and vice-versa).
\end{proof}

\section{Concluding Remarks}
\label{section:conclusion}
In this paper, we have studied online TRP  in the prediction model  and established a $\min\{3.732+4\delta,4\}$-competitive deterministic algorithm, with any error $0\leq \delta\leq 1$, which is an improvement compared to the state-of-the-art deterministic bound of 4 in the original model. We also established a lower bound of 3 improving on the state-of-the-art $1+\sqrt{2}$. For future work, it would be interesting to extend the competitive ratio to any metric satisfying triangle inequality (not just line). It would also be interesting to establish stronger lower bound. Finally, it would be interesting to remove the knowledge of $\delta$ from our algorithm, consider other appropriate prediction models, and perform experimental evaluation.

\bibliographystyle{named}
\bibliography{references}

@book{Borodin98,
  author       = {Allan Borodin and
                  Ran El{-}Yaniv},
  title        = {Online computation and competitive analysis},
  publisher    = {Cambridge University Press},
  year         = {1998},
  isbn         = {978-0-521-56392-5},
  timestamp    = {Thu, 21 Apr 2011 19:59:45 +0200},
  biburl       = {https://dblp.org/rec/books/daglib/0097013.bib},
  bibsource    = {dblp computer science bibliography, https://dblp.org}
}

@inproceedings{BienkowskiKL21-ICALP,
  author       = {Marcin Bienkowski and
                  Artur Kraska and
                  Hsiang{-}Hsuan Liu},
  title        = {Traveling Repairperson, Unrelated Machines, and Other Stories About
                  Average Completion Times},
  booktitle    = {ICALP},
  series       = {LIPIcs},
  volume       = {198},
  pages        = {28:1--28:20},
  year         = {2021},
}

@inproceedings{BienkowskiL19-MFCS,
  author       = {Marcin Bienkowski and
                  Hsiang{-}Hsuan Liu},
  editor       = {Peter Rossmanith and
                  Pinar Heggernes and
                  Joost{-}Pieter Katoen},
  title        = {An Improved Online Algorithm for the Traveling Repairperson Problem
                  on a Line},
  booktitle    = {MFCS},
  series       = {LIPIcs},
  volume       = {138},
  pages        = {6:1--6:12},
  year         = {2019},
}

@article{Hwang,
author = {Hwang, Dawsen and Jaillet, Patrick},
year = {2017},
month = {12},
pages = {},
title = {Online scheduling with multi-state machines},
volume = {71},
journal = {Networks},
}

@article{KrumkePPS03-TCS,
  author       = {Sven Oliver Krumke and
                  Willem de Paepe and
                  Diana Poensgen and
                  Leen Stougie},
  title        = {News from the online traveling repairman},
  journal      = {Theor. Comput. Sci.},
  volume       = {295},
  pages        = {279--294},
  year         = {2003},
}

@inproceedings{GouleakisLS23,
  author       = {Themistoklis Gouleakis and
                  Konstantinos Lakis and
                  Golnoosh Shahkarami},
  editor       = {Brian Williams and
                  Yiling Chen and
                  Jennifer Neville},
  title        = {Learning-Augmented Algorithms for Online {TSP} on the Line},
  booktitle    = {AAAI},
  pages        = {11989--11996},
  publisher    = {{AAAI} Press},
  year         = {2023},
}

@inproceedings{LykourisV18,
  author       = {Thodoris Lykouris and
                  Sergei Vassilvitskii},
  editor       = {Jennifer G. Dy and
                  Andreas Krause},
  title        = {Competitive Caching with Machine Learned Advice},
  booktitle    = {ICML},
  volume       = {80},
  pages        = {3302--3311},
  publisher    = {{PMLR}},
  year         = {2018},
  url          = {http://proceedings.mlr.press/v80/lykouris18a.html},
  timestamp    = {Wed, 03 Apr 2019 18:17:30 +0200},
  biburl       = {https://dblp.org/rec/conf/icml/LykourisV18.bib},
  bibsource    = {dblp computer science bibliography, https://dblp.org}
}

@inproceedings{Kumar18,
author = {Kumar, Ravi and Purohit, Manish and Svitkina, Zoya},
title = {Improving online algorithms via ML predictions},
year = {2018},
publisher = {Curran Associates Inc.},
address = {Red Hook, NY, USA},
abstract = {In this work we study the problem of using machine-learned predictions to improve the performance of online algorithms. We consider two classical problems, ski rental and non-clairvoyant job scheduling, and obtain new online algorithms that use predictions to make their decisions. These algorithms are oblivious to the performance of the predictor, improve with better predictions, but do not degrade much if the predictions are poor.},
booktitle = {NeurIPS},
pages = {9684–9693},
numpages = {10},
location = {Montr\'{e}al, Canada},
}

@inproceedings{Rohatgi20,
author = {Rohatgi, Dhruv},
title = {Near-optimal bounds for online caching with machine learned advice},
year = {2020},
publisher = {Society for Industrial and Applied Mathematics},
address = {USA},
abstract = {In the model of online caching with machine learned advice, introduced by Lykouris and Vassilvitskii, the goal is to solve the caching problem with an online algorithm that has access to next-arrival predictions: when each input element arrives, the algorithm is given a prediction of the next time when the element will reappear. The traditional model for online caching suffers from an Ω(log k) competitive ratio lower bound (on a cache of size k). In contrast, the augmented model admits algorithms which beat this lower bound when the predictions have low error, and asymptotically match the lower bound when the predictions have high error, even if the algorithms are oblivious to the prediction error. In particular, Lykouris and Vassilvitskii showed that there is a prediction-augmented caching algorithm with a competitive ratio of O(1 + min([MATH HERE], log k)) when the overall ℓ1 prediction error is bounded by η, and opt is the cost of the optimal offline algorithm. The dependence on k in the competitive ratio is optimal, but the dependence on η/opt may be far from optimal. In this work, we make progress towards closing this gap. Our contributions are twofold. First, we provide an improved algorithm with a competitive ratio of O(1 + min((η/opt)/k, 1) log k). Second, we provide a lower bound of Ω(log min((η/opt)/(k log k), k)).},
booktitle = {SODA},
pages = {1834–1845},
numpages = {12},
location = {Salt Lake City, Utah},
}

@article{afrati1986complexity,
  title={The complexity of the travelling repairman problem},
  author={Afrati, Foto and Cosmadakis, Stavros and Papadimitriou, Christos H and Papageorgiou, George and Papakostantinou, Nadia},
  journal={RAIRO-Theoretical Informatics and Applications},
  volume={20},
  number={1},
  pages={79--87},
  year={1986},
  publisher={EDP Sciences}
}

@article{feuerstein2001line,
  title={On-line single-server dial-a-ride problems},
  author={Feuerstein, Esteban and Stougie, Leen},
  journal={Theoretical Computer Science},
  volume={268},
  number={1},
  pages={91--105},
  year={2001},
  publisher={Elsevier}
}


\end{document}